	\theoremstyle{plain}
	\newtheorem{theorem}{Theorem}
	\newtheorem{lemma}[theorem]{Lemma}       
	\newtheorem{proposition}[theorem]{Proposition}
	\theoremstyle{definition}
	\newtheorem{definition}{Definition}
\newcommand\Eb{\mathds{E}}
\newcommand\Pb{\mathds{P}}
\newcommand\Rb{\mathds{R}}
\newcommand\Ac{\mathscr{A}}
\newcommand\Fc{\mathscr{F}}
\newcommand\Lc{\mathscr{L}}
\newcommand\Tc{\mathscr{T}}
\newcommand\Ebt{\widetilde{\Eb}}
\newcommand\Pbt{\widetilde{\Pb}}
\newcommand\Wt{\widetilde{W}}
\newcommand\Bt{\widetilde{B}}
\newcommand\Ct{\widetilde{C}}
\newcommand{\dd}{\mathrm{d}}
\newcommand{\eqlnostar}[2]{\begin{align}\label{#1}#2\end{align}}
\newcommand{\eqstar}[1]{\begin{align*}#1\end{align*}}
\begin{document}

\title{Hedging longevity risk in defined contribution pension schemes}

\author{
Ankush Agarwal
\thanks{\textbf{e-mail}: \url{ankush.agarwal@glasgow.ac.uk}}
\qquad
Christian-Oliver Ewald
\thanks{\textbf{e-mail}: \url{christian.ewald@glasgow.ac.uk}}
\qquad 
Yongjie Wang
\thanks{\textbf{e-mail} (corresponding author): \url{y.wang.12@research.gla.ac.uk}}
\\
Adam Smith Business School, University of Glasgow, G12 8QQ Glasgow, United Kingdom
}

\date{May 2020}
\maketitle

\begin{abstract}
Pension schemes all over the world are under increasing pressure to efficiently hedge the longevity risk posed by ageing populations. In this work, we study an optimal investment problem for a defined contribution pension scheme which decides to hedge the longevity risk using a mortality-linked security, typically a longevity bond. The pension scheme invests in the risky assets available in the market, including the longevity bond, by using the contributions from a representative scheme member to ensure a minimum guarantee such that the member is able to purchase a lifetime annuity upon retirement. We transform this constrained optimal investment problem into a single investment portfolio optimisation problem by replicating a self-financing portfolio of future contributions from the member and the minimum guarantee provided by the scheme. We solve the resulting optimisation problem using the dynamic programming principle and through a series of numerical studies reveal that the longevity risk has an important impact on the performance of investment strategy. Our results add to the growing mathematical evidence supporting the use of mortality-linked securities for efficient hedging of the longevity risk.

{\bf Keywords:}
defined contribution pension scheme, longevity bond, stochastic control, dynamic programming principle
\end{abstract}

\section{Introduction}
\label{sec:introduction}

A pension scheme is an important economic mechanism in the society as it provides people with regular incomes after their retirement from the productive labour workforce. According to the benefit and contribution policy, there are two main categories of pension schemes: defined benefit schemes (DB schemes) and defined contribution schemes (DC schemes). In a DB scheme, pension benefits to be paid by the scheme after retirement time are pre-defined. In this case, scheme members only need to pay contributions regularly and bear no investment risk, while the scheme manager bears the risk of bad investment performance and may fail to deliver the benefits. In a DC scheme, the amount of contributions payable by scheme members is pre-determined instead of the benefit payments. The benefits depend on the size of the accumulated contributions and scheme manager's investment performance, and are uncertain until the retirement time. As opposed to the DB scheme members, members of a DC scheme have more choice on the ways to receive their benefits. Typically, there are three options: to purchase a lifetime annuity, or, to opt for a flexible income-drawdown option (that is, to withdraw money periodically while the remaining money stays in the pension scheme), or, to withdraw the lump sum cash amount from the scheme. In a DC scheme, the employer of scheme members bears no risk as its only responsibility is to pay the contributions along with its employees. However, the employees face the risk of receiving insufficient benefits after their retirement due to bad investment performance. \cite {biffis2014keeping} mentioned that the estimate of global amount of annuity- and pension-related longevity risk exposure amount to \$15 trillion. According to \cite{cocco2012longevity}, the average life expectancy of 65-year-old US (UK) males increases by 1.2 (1.5) years per decade. As a consequence, a DB scheme for those populations would have needed 29\% more wealth in 2007 than in 1970. Due to the unsustainability of the DB schemes, the DC schemes are becoming increasingly popular among the employers.

\cite{boulier2001optimal} studied the optimal investment problem for a DC scheme under stochastic interest rate framework in which a downside protection for the member's benefits is provided. They obtained the optimal investment strategy which maximises the expected terminal utility from the surplus between the scheme's final wealth and the downside guarantee by applying the dynamic programming principle. \cite{gao2008stochastic} used the dual approach to solve the optimal asset allocation problem for a DC scheme in a market with stochastic interest rate. \cite{deelstra2003optimal} studied the assets and liabilities management (ALM) problem for the DC pension schemes from the perspective of a scheme manager. They maximised the expected utility of terminal wealth in the presence of a specific minimum guarantee and used the martingale theory to solve the optimisation problem. \cite{han2012optimal} studied the optimal asset allocation problem for DC pension schemes considering the inflation risk and labour income risk. They introduced a minimum guarantee on the purchase of an inflation-indexed annuity at retirement. To hedge the inflation risk, they included an inflation-indexed bond in the investment portfolio.

Inspired by these previous works, we consider a DC scheme in which the scheme manager allocates the wealth in a financial market to achieve the amount needed to buy a lifetime annuity upon retirement of the members. The annuity acts as a minimum guarantee and its price depends on the expected remaining lifetime of the scheme members. Thus, it is crucial to model the members' mortality behaviour properly. The force of mortality, which is the instantaneous rate of mortality, is often used to implement survival analysis in actuarial science. Classical works of \cite{de1725annuities} and \cite{gompertz1825xxiv} have studied deterministic force of mortality models. However, more recent research on mortality modelling considers discrete-time and continuous-time models with stochastic force of mortality. It is straightforward to model the force of mortality in a discrete-time setting since the mortality data are usually reported annually. \cite{lee1992modeling} were among the earliest to model and estimate the force of mortality using time series methods. Other discrete-time models include, for example, the CBD model and Renshaw-Haberman cohort model (\cite{cairns2006two}; \cite{renshaw2006cohort}). Some studies, such as \cite{milevsky2001mortality} and \cite{dahl2004stochastic}, found similarities between interest rates and force of mortality; for example, that they are positive and have a term structure. Thus, drawing from the interest rate modelling literature, diffusion processes and jump processes are now used to study the force of mortality. In particular, affine mortality models are popular and are studied in works such as \cite{dahl2004stochastic}, \cite{biffis2006bidimensional}, \cite{luciano2005non} and \cite{russo2011calibrating}. In this work, we assume that the evolution of mortality rate of all the scheme members can be described by the same continuous-time stochastic process and thus, choose a representative member to study the problem. We follow \cite{menoncin2009death} and describe the force of mortality using an affine model which is an analogue of the Cox-Ingersoll-Ross (CIR) process. The analytical tractability of the affine model allows us to price mortality-linked securities using the arbitrage-free pricing framework that has been developed for interest-rate derivatives.

Proposed by \cite{blake2001survivor}, a longevity bond provides coupon payment based on the number of survivors in a chosen reference population. Therefore, investment in a longevity bond not only provides an efficient way to hedge the longevity risk, but also allows diversification of investment portfolios. \cite{menoncin2008role} studied an optimal consumption and investment problem for an investor with a stochastic death time. He maximised the investor's intertemporal consumption until the death time and used a rolling longevity bond to hedge against the investor's longevity risk. He showed that the optimal amount to be invested in the longevity bond decreases over time since the need for hedging longevity risk decreases as approaching death while the optimal consumption increases over time.  \cite{de2017existence} modelled the force of mortality using the CIR process which guarantees the mortality rates to be non-negative. They argued that although there is no liquid market for such longevity bonds, it is not practical to put the market price of longevity risk at zero. Instead, they assumed a time-varying market price of risk which is proportional to the square root of the mortality rate. \cite{cocco2012longevity} studied the optimal consumption and investment problem in a life-cycle model. By calibrating to the US historical data and current projections, they showed considerable uncertainty with respect to the future improvements in mortality rates. They also suggested that longevity linked securities can help in longevity risk management. \cite{menoncin2017longevity} studied the optimal consumption and investment problem for an individual investor to hedge his longevity risk before retirement. They showed that the optimal proportion that should be invested in longevity bond is higher than other assets. In this work, we consider a financial market that consists of three risky assets: a stock, a rolling bond and a rolling longevity bond. The results show that the longevity bond provides an efficient way to hedge the longevity risk.

Most articles studying the optimal portfolio strategy for DC schemes, focus on the financial risks (for example, interest rate risk, inflation risk) and ignore the longevity risk. Even the studies which take the longevity risk in account, mainly focus on optimal asset allocation problems for DB pension schemes and use time-varying but deterministic force of mortality models to measure the longevity risk. In addition, as the pension schemes are managed on behalf of its members, it makes more sense to study optimal investment  problems which maximise the members' expected terminal utility. However, only a few authors, for example, \cite{han2012optimal}, \cite{he2015optimal}, studied the optimal investment problems from the perspective of scheme members, while the rest focus mainly on the ALM framework. In our work, we study the optimal asset allocation problem for a DC scheme from the perspective of a representative scheme member under the framework of stochastic force of mortality and stochastic interest rate. In particular, our main contribution is in extending the works of \cite{boulier2001optimal}, \cite{gao2008stochastic}, \cite{menoncin2017longevity}, to investigate the optimal portfolio allocation for DC schemes while hedging the longevity risk. The representative member requires that the scheme's wealth level must be at least sufficient to buy a lifetime annuity which acts as a minimum guarantee. The scheme manager maximises the expected utility of the terminal surplus between the scheme's wealth level and the minimum guarantee. To hedge the longevity risk, a rolling longevity bond as introduced in \cite{menoncin2008role} is added to the investment portfolio. Our results show that the longevity risk plays an important role in the pension scheme's risk management and reveal that the longevity bond can not only offer an efficient way to hedge the future longevity risk, but can also provide attractive risk premiums.

The rest of this paper is organised as follows. In Section \ref{sec:framework} we present the mathematical framework of the problem and introduce different risky assets considered in the financial market. In Section \ref{sec:criterion}, we first formulate the constrained optimisation problem in which the scheme's wealth level should be sufficient to purchase a lifetime annuity at member's retirement. We identify different components of the investment portfolio and reformulate the portfolio selection as a single investment portfolio optimisation problem in Section \ref{sec:uncostrained}. We derive the analytical solution for the optimal investment strategy by using the dynamic programming principle in Section \ref{sec:solution}. Section \ref{sec:numerics} discusses several numerical studies including sensitivity analyses with respect to different model parameters which reveal the significance of introducing a rolling longevity bond in the investment portfolio.
\section{Financial market setting}
\label{sec:framework}
Let $(\Omega, \Fc, \{\Fc(t)\}_{t\ge 0}, \Pb)$ be a filtered probability space satisfying the usual conditions on an infinite time horizon $\Tc=[0,\infty)$. $\Pb$ is the physical (observable) probability measure and $\Fc (t)$ signifies the information available to the investor at time $t.$ On this probability space, we consider a frictionless financial market consisting of a stock, a \textit{rolling bond} and a \textit{rolling longevity bond}. For practical pricing of zero-coupon bond and longevity bond, we consider a stochastic risk-free interest rate $r(t)$ and a stochastic force of mortality $\lambda(t).$ Furthermore, we denote a three-dimensional standard Brownian motion under $\Pb$ by $\bigl\{W(t)\mid t\in \Tc\bigr\}=\bigl\{\bigl[W_1 (t),W_2 (t),W_3 (t)\bigr]^\prime \mid t\in \Tc\bigr\}.$ We assume that $r(t)$ is described by a CIR process:
\eqlnostar{eq:dr}{
\dd r(t)=(a_r - b_r r(t))\dd t+\sigma_r \sqrt{r(t)}\dd W_1 (t),&& r(0)=r_0,
}
where $a_r$, $b_r$ and $\sigma_r$ are positive constants. We further assume that the Feller's condition $2a_r > \sigma_r ^2$ is satisfied so that, for any $t\in \Tc$, $r(t)>0$ almost surely under $\Pb$.

As stated earlier in Section \ref{sec:introduction}, affine models are popular when modelling the stochastic force of mortality. \cite{luciano2005non} described the force of mortality (also called mortality intensity) by affine models and calibrated the models using observed and projected UK mortality tables. They claimed that affine processes with a deterministic part which increases exponentially could describe the evolution of force of mortality properly. \cite{russo2011calibrating} calibrated three different affine stochastic mortality models using term assurance premiums of three Italian insurance companies, and proposed that such affine models can be used for pricing mortality-linked securities. Thus, in the same spirit, we assume that $\lambda(t)$ evolves as
\eqlnostar{eq:dl}{
    \dd \lambda(t)=\left(a_{\lambda} (t)-b_{\lambda} \lambda(t)\right)\dd t +\sigma_{\lambda}\sqrt{\lambda(t)}\dd W_2 (t),&& \lambda(0)=\lambda_0,
}
where $a_\lambda(t)$ is a deterministic function, $b_\lambda$ and $\sigma_\lambda$ are positive constants. We restrict the mortality model parameters to satisfy the condition $2a_\lambda(t)>\sigma_\lambda^2 ,$ to ensure the strict positivity of $\lambda(t)$. The initial value of the mortality intensity $\lambda_0$ is calculated according to the Gompertz-Makeham law and is given by 
\eqlnostar{eq:lambda0}{
\lambda_0 = \phi +\frac{1}{b} e^{\frac{t_0 -m}{b}},
} 
where $t_0, m, \phi$ and $b$ are constants. As argued in \cite{menoncin2009death}, we wish that the expected value of $\lambda(t)$ equals to the Gompertz-Makeham force of mortality to ensure at any time $t$, $\lambda(t)$ has a reasonable value. To achieve this, we suppose that $a_{\lambda} (t)$ is of the following form 
\eqlnostar{eq:al}{
    a_\lambda (t)=b_\lambda \left(\phi +\left(\frac{1}{b_\lambda b}+1\right)\frac{1}{b}e^{\frac{t-m}{b}}\right).
}

The force of mortality is used as a tool to study the instantaneous survival rate of a population. If we denote by $p(t)$ the fraction of a population that survives from time $0$ to $t$, then $p(t)$ measures the cumulative survival rate which coincides with the survival probability. Since the force of mortality measures the instantaneous rate of mortality, we can write
\eqlnostar{eq:dp}{
\frac{\dd p(t)}{p(t)}=-\lambda(t)\dd t, && p(0)=1.
}
Given information up to time $t\in\Tc$, the (conditional) expected survival probability from $t$ to $s>t$ is given by (see \citet[Section 2.2]{menoncin2008role})
\eqstar{
\Eb\left[\frac{p(s)}{p(t)} \biggm| \Fc(t)\right]=\Eb \left[ e^{-\int_t^s\lambda(u)\dd u}\biggm| \Fc (t)\right].
} 

To discuss the prices of tradeable financial risky assets in the market, we first introduce a risk-neutral pricing measure $\Pbt$ by the following Radon-Nikodym derivative
\eqstar{
\frac{\dd \Pbt}{\dd \Pb}=Z(t)=\exp\left(-\int_{0}^{t}\Theta(s)^\prime \dd W(s)-\frac{1}{2}\int_{0}^{t}|\Theta(s)|^{2}\dd s \right).
}
In the above $\{\Theta(t)\mid t\in \Tc\}=\bigl\{\bigl[\theta_1(t),\theta_2(t),\theta_3(t)\bigr]^\prime \bigm| t\in \Tc\bigr\}$ is an $\Rb^3 $-valued, $\Fc$-adapted process such that $Z(t)$ is a martingale and $\Eb[Z(t)]=1$. Here, we denote by $\Eb[\cdot]$ the expectation operator under $\Pb$. By Girsanov's theorem, $\bigl\{\Wt (t) \mid t \in \Tc\bigr\}=\bigl\{\bigl[\Wt_1 (t),\Wt_2 (t),\Wt_3 (t)\bigr]^\prime \bigm| t \in \Tc\bigr\}$ is a three-dimensional standard Brownian motion under $\Pbt$ such that 
\eqlnostar{eq:Girsanov}{
    \Wt (t)=W(t)+\int_0 ^t \Theta(s)\dd s.
} 
The introduction of a risk-neutral measure also allows us to motivate the idea of market price of risk or risk-premium through $\Theta(t)$ in our financial market framework.

The first financial asset in the market is a representative stock. We suppose that the stock price process $S(t)$ under $\Pb$ evolves as
\eqstar{
    \frac{\dd S(t)}{S(t)}=&\left(r(t)+\theta_r\sigma_S^r r(t)+\theta_S\sigma_S\right)\dd t + \sigma_S^r \sqrt{r(t)}\dd W_1 (t)+\sigma_S \dd W_3 (t),&& S(0)=S_0,
} 
In the above, $\sigma_S, \sigma^r_S, \theta_r, \theta_S$ are some constants. Here, we have assumed that the market prices of interest rate risk and stock risk are $\theta_1(t)=\theta_r\sqrt{r(t)}$ and $\theta_3(t)=\theta_S$, respectively. 
The instantaneous covariance between the stock price and risk-free interest rate is captured by $\sigma_S^r\sqrt{r(t)}.$ The market price of stock risk and different volatility coefficients could be stochastic and take many different forms. However, as we mainly focus on the interest rate risk and longevity risk rather than investment risk, it is reasonable to suppose that they are constants.

For the pricing of a zero-coupon bond $B(t,T_B)$ which pays one unit of currency at a fixed maturity time $T_B,$ we first introduce a money market account $R(t)$: 
\eqstar{
\frac{\dd R(t)}{R (t)}=r(t)\dd t,& &R(0)=1. 
}
The risk-neutral pricing formula then give us 
\eqstar{
    B(t,T_B)=\Ebt \left[\frac{R(t)}{R(T_B)} \Biggm| \Fc(t)\right]=\Ebt\left[e^{-\int_t ^{T_B} r(u)\dd u} \biggm| \Fc (t) \right],
}
where $\Ebt[\cdot]$ is the expectation operator under measure $\Pbt$. As the interest rate $r(t)$ follows an affine model, we could solve for the bond price as
\eqlnostar{eq:affineB}{
B(t,T_B)=e^{f_0 (t,T_B)-f_1(t,T_B) r(t)}.
}
where
\eqlnostar{eq:bondf}{
f_0 (t,T_B)&=\frac{2a_r}{\sigma_r ^2}\log\left(\frac{2\eta_r e^{\frac{1}{2}(\tilde{b}_r + \eta_r)(T_B-t)}}{(\tilde{b}_r + \eta_r)(e^{\eta_r (T_B-t)}-1)+2\eta_r} \right),\\
f_1 (t,T_B)&=\frac{2(e^{\eta_r(T_B-t)} -1)}{(\tilde{b}_r + \eta_r )(e^{\eta_r(T_B-t)} -1)+2\eta_r},\\ \nonumber
\eta_r&=\sqrt{\tilde{b}_r ^2 +2\sigma_r ^2},\quad\quad \tilde{b}_r =b_r+\theta_r \sigma_r.
}
Such a formula can be found in several sources, for example, \citet[Section 3.2.3]{brigo2007interest}, \citet[Section 3.1.2]{cuchiero2006affine}. The dynamics of $B(t,T_B)$ under $\Pb$ is given as 
\eqstar{
    \frac{\dd B(t,T_B)}{B(t,T_B)}=\Big(r(t)+\theta_r\sqrt{r(t)}\sigma_B(t,T_B)\Big)\dd t+ \sigma_B(t,T_B)\dd W_1(t),
}
where we denote $\sigma_B(t,T_B)=-f_1(t,T_B)\sigma_r\sqrt{r(t)}.$ We include the complete calculations in Appendix \ref{sec:app1}.

As argued in \cite{boulier2001optimal}, it is convenient to use a single bond with a rolling maturity to replicate any bond on the market. Thus, we introduce a rolling bond $B(t)$ (with a little abuse of notation) with a constant time to maturity $T_B$. The price process of $B(t)$ is described by the following stochastic differential equation (SDE):
\eqlnostar{eq:rollingbond}{
    \frac{\dd B(t)}{B(t)}=\left(r(t)+\theta_r\sqrt{r(t)}\sigma_B(t,t+T_B)\right)\dd t+\sigma_B(t,t+T_B)\dd W_1(t).
}
Through the following equation, we can see that the zero-coupon bond $B(t,T_B)$ is replicable using cash and the rolling bond $B(t)$:
\eqstar{
    \frac{\dd B(t,T_B)}{B(t,T_B)}=\left(1-\frac{\sigma_B(t,T_B)}{\sigma_B(t,t+T_B)}\right)\frac{\dd R(t)}{R(t)}+\frac{\sigma_B(t,T_B)}{\sigma_B(t,t+T_B)}\frac{\dd B(t)}{B(t)}.
}
Thus, the use of rolling bond is equivalent to using a fixed maturity zero-coupon bond in the market. 

The third and final asset in the market is a zero-coupon longevity bond, which is primarily used to hedge the longevity risk. 
\begin{definition}
A zero-coupon longevity bond is a contract paying a face amount equal to the survival probability of the reference population from time 0 until a fixed maturity time.
\label{def:longevitybond}
\end{definition}
\noindent According to Definition \ref{def:longevitybond}, the payment of a zero-coupon longevity bond at a fixed maturity time $T_L$ is $p(T_L)$. Suppose that the market price of longevity risk is $\theta_2(t)=\theta_\lambda\sqrt{\lambda(t)}$, then the arbitrage-free price $L(t,T_L)$ of a zero-coupon longevity bond with fixed maturity time $T_L$ is given as 
\eqstar{
    L(t,T_L)&=\Ebt\left[\frac{R(t)}{R(T_L)}p(T_L) \biggm| \Fc(t)\right]=e^{-\int_0 ^t \lambda(u)\dd u}\Ebt\left[e^{-\int_t ^{T_L} \left(r(u)+\lambda(u)\right)\dd u} \biggm| \Fc(t)\right].
}
Due to the affine nature of $r(t)$ and $\lambda(t)$ and the independence between them, the longevity bond price can be expressed in the following form
\eqlnostar{eq:affineL}{
L(t,T_L ) =&e^{-\int_0 ^t \lambda(u)\dd u}N(t,T_L),
}    
where 
\eqlnostar{eq:bondh}{
N(t,T_L) &= e^{f_0 (t,T_L )-f_1 (t,T_L )r(t)+h_0 (t,T_L )-h_1 (t,T_L )\lambda(t)},\\
h_1 (t,T_L) &=\frac{2(e^{\eta_\lambda(T_L-t)} -1)}{(\tilde{b}_\lambda + \eta_\lambda )(e^{\eta_\lambda(T_L-t)} -1)+2\eta_\lambda},\\
h_0 (t,T_L) &=-\int_t^{T_L} a_\lambda(u)h_1(u,T_L) \dd u,\\
\eta_\lambda &=\sqrt{\tilde{b}_\lambda ^2 +2\sigma_\lambda ^2}, \quad\quad \tilde{b_\lambda} =b_\lambda+\theta_\lambda\sigma_\lambda.
}
By denoting $\sigma_L^r (t,T_L)=-f_1 (t,T_L)\sigma_r \sqrt{r(t)}$ and $\sigma_L^\lambda(t,T_L)=-h_1(t,T_L) \sigma_\lambda\sqrt{\lambda(t)}$, the evolution of $L(t,T_L)$ is then described as
\eqstar{
    \frac{\dd L(t,T_L )}{L(t,T_L )}=&\Big(r(t)+\theta_r \sqrt{r(t)}\sigma_L^r (t,T_L ) +\theta_\lambda \sqrt{\lambda(t)}\sigma_L^\lambda (t,T_L)\Big)\dd t\\
    &+\sigma_L^r (t,T_L )\dd W_1 (t)+\sigma_L ^\lambda(t,T_L)\dd W_2 (t).
} 
See Appendix \ref{sec:appendix2} for detailed calculations. 

In the same manner as zero-coupon bond, we consider a rolling longevity bond $L(t)$ (with a little abuse of notation) with a constant time to maturity $T_L$ whose price process  under $\Pb$ is given as:
\eqlnostar{eq:rollingL}{
    \frac{\dd L(t)}{L(t)}=&\left(r(t)+\theta_r\sqrt{r(t)}\sigma_L^r(t,t+T_L)+\theta_\lambda\sqrt{\lambda(t)}\sigma_L^\lambda(t,t+T_L)\right)\dd t\\
    &+\sigma_L^r(t,t+T_L)\dd W_1(t)+\sigma_L^\lambda(t,t+T_L)\dd W_2(t).
}
We see that the rolling longevity bond correlates with interest rate $r(t)$ as well as force of mortality $\lambda(t)$. In fact, zero-coupon longevity bonds with any fixed maturity can be replicated using rolling bond, rolling longevity bond and cash:
\eqstar{
    \frac{\dd L(t,T_L)}{L(t,T_L)}=n_0(t)\frac{\dd R(t)}{R(t)}+n_B(t)\frac{\dd B(t)}{B(t)}+n_L(t)\frac{\dd L(t)}{L(t)},
}
where
\eqstar{
n_L(t)=\frac{\sigma_L^\lambda(t,T_L)}{\sigma_L^\lambda(t,t+T_L)}, & & n_B(t)=\frac{\sigma_L^r(t,T_L)}{\sigma_B(t,t+T_B)}-n_L(t)\frac{\sigma_L^r(t,t+T_L)}{\sigma_B(t,t+T_B)}, & & n_0(t)=1-n_B(t)-n_L(t).
}
It is common to use rolling bonds in the literature: \cite{han2012optimal} introduced a rolling indexed bond to hedge the inflation risk for a DC scheme. \cite{menoncin2008role} used rolling longevity bond to transfer an individual's longevity risk. Indeed, the proposed problem in this work can also be solved using a fixed maturity zero-coupon longevity bond and zero-coupon bond. The use of rolling longevity bond and rolling bond only simplifies the calculations in Section \ref{sec:main}. Moreover, our specific choice of market prices of risks $\theta_r\sqrt{r(t)}$ and $\theta_\lambda\sqrt{\lambda(t)}$ maintains the affine form of our models (see, for example, \cite{duffee2002term}).

For any $t,T_B,T_L\in \Tc$, we describe the risky asset prices in the form of a vector: 
\eqlnostar{eq:market}{
    \left[\begin{array}{ccc}
    \frac{\dd B(t)}{B(t)}\\
    \frac{\dd L(t)}{L(t)}\\
    \frac{\dd S(t)}{S(t)}
    \end{array}\right]
    =\left(r(t)\mathbb{1}+M(t)\right)\dd t+\Sigma(t)^\prime \dd W(t),
}
where 
\eqstar{
    &M(t)=\left[\begin{array}{ccc}
    \theta_r \sqrt{r(t)}\sigma_B (t,t+T_B)\\
    \theta_r \sqrt{r(t)}\sigma_L^r (t,t+T_L)+\theta_\lambda \sqrt{\lambda(t)}\sigma_L^\lambda (t,t+T_L)\\
    \theta_r\sigma_S^r  r(t)+\theta_S\sigma_S 
    \end{array}\right],\\
    &\Sigma (t)^\prime=\left[\begin{array}{ccc}
    \sigma_B (t,t+T_B) & 0 & 0\\
    \sigma_L^r (t,t+T_L) & \sigma_L^\lambda (t,t+T_L) & 0 \\
    \sigma_S^r\sqrt{r(t)} & 0 & \sigma_S 
    \end{array}\right].
}
For ease of presentation, we also denote by $z(t)=[r(t), \lambda(t)]^\prime$ whose dynamics is given as
\eqlnostar{eq:state}{
\dd z(t)=\mu (t,z(t))\dd t+\xi(t,z(t))^\prime \dd W(t), &&  z(0) = [r_0, \lambda_0]^\prime,
}
where
\eqstar{
    \mu (t,z(t))=\left[\begin{array}{ccc}
    a_r -b_r r(t)\\
    a_\lambda (t)-b_\lambda\lambda(t)
    \end{array} \right], & &\xi (t,z(t))^\prime=\left[\begin{array}{ccc}
    \sigma_r\sqrt{r(t)} & 0 & 0\\
    0 & \sigma_\lambda \sqrt{\lambda(t)}  & 0
    \end{array}\right].
}

\section{Main results}
\label{sec:main}
In the literature, a representative member has been used to study the optimal asset allocation problem for DC schemes, for instance, see \cite{boulier2001optimal}. In this work, we also consider a representative member who continuously contributes a fraction of his wage into the pension scheme during the accumulation phase. At retirement time $T\in \Tc$, the accumulated contributions are used to purchase a lifetime annuity to provide regular incomes in retirement. The scheme manager works on behalf of the representative member and decides the investment strategy in the accumulation phase to increase the scheme wealth. A minimum guarantee is introduced in the scheme to protect the representative member against poor investment performance. There are two main types of optimality criterion used in the literature for portfolio selection problems: the utility maximisation criterion and the mean-variance criterion. In our framework, we follow \cite{merton1969lifetime} and assume that the scheme manager aims to maximise the representative member's expected utility from the surplus between the terminal wealth and the minimum guarantee.

\subsection{The utility maximisation problem}
\label{sec:criterion}
We assume that the representative member contributes a fraction $r_c$ of his instantaneous wage $w(t).$ Previous studies, such as \cite{han2012optimal} and \cite{guan2014optimal}, model the wage (or, contribution) as a stochastic process to study the optimal asset allocation problem for DC schemes. To simplify our calculations, the instantaneous wage in this work is assumed to be constant, that is, for any $t\in [0,T]$, $w(t)=w.$ Thus, the contribution $c(t) = r_c w(t) = r_c w = c$ is also constant. We note that our following analysis is also applicable when $w(t)$ and $c(t)$ are treated as independent stochastic processes or deterministic functions. During the accumulation phase, at any time $t\in[0,T]$, the scheme manager invests $ \alpha_S (t), \alpha_B (t)$ and $\alpha_L (t)$ amounts of money in stock, rolling bond and rolling longevity bond, respectively. It is clear that the amount of money invested in money market is $\alpha_0(t) = F(t) - \alpha_B(t) - \alpha_L(t) -\alpha_S(t),$ where $F(t)$ denotes the scheme's wealth level. The death time of the representative member $\tau$ is assumed to be a random variable on $(\Omega,\Fc,\Pb)$. In the case where the member dies before the retirement time, that is, $0<\tau<T$, we assume that his heirs receive his total pension as bequest. The dynamics of $F(t)$ is given as
\eqlnostar{eq:fundwealth}{
\dd F(t)=&\Big(r(t)F(t)+c+\alpha(t)^\prime M(t)\Big)\dd t+\alpha(t)^\prime \Sigma(t)^\prime \dd W(t), && F(0) = F_0,
}
where $\big\{\alpha (t)$ $\bigm| t\in [0,T]\big\}$ $=\big\{\left[\alpha_B (t),\alpha_L (t), \alpha_S (t)\right]^\prime \bigm| t\in [0,T]\big\}$ denote the investment in risky assets. Note here that the force of mortality $\lambda(t)$ is not involved in the scheme's wealth process since we assume that the heirs take the passed away members' pension (scheme's wealth) upon death. If the heirs were to receive only a fraction of the scheme's wealth upon the member's death, the wealth process will be influenced by $\lambda(t)$. We provide detailed calculations in Appendix \ref{appendix:wealth} which clarify this subtlety. 

The representative member uses his total pension wealth to purchase a lifetime annuity at retirement time $T$ and requires that the pension wealth must exceed this annuity price which acts as a minimum guarantee. The minimum guarantee was also previously considered in works such as \cite{boulier2001optimal}, \cite{deelstra2003optimal}, \cite{han2012optimal} and \cite{guan2014optimal}. We extend these works to the case where the death time is uncertain and the force of mortality is stochastic. To compute the price of the lifetime annuity, we first need to decide the level of instalments that the annuity delivers. Typically, the wage replacement ratio $r_w$, the percentage of retirement income to pre-retirement income, is a good estimate of the income needed to maintain the living standard in retirement. We set the instantaneous instalment of the annuity to be $\pi=r_w w$, so that the lifetime annuity provides sufficient retirement income for subsistence. By denoting $a(T)$ as the price of lifetime annuity at retirement time $T$, we have 
\eqstar{
a(T)=\Ebt \left[ \int_T^\infty \pi  \frac{R(T)}{R(s)}\frac{p(s)}{p(T)}\dd s \biggm| \Fc(T)\right].
}  
The minimum guarantee $G$ is to purchase life time annuity for the surviving members at retirement time $T$. Its value at $T$ is thus given as
\eqstar{
    G(T)&= p(T)a(T)=\Ebt \left[\int_T ^\infty \pi \frac{R(T)}{R(s)} p(s)\dd s \biggm| \Fc(T)\right].
} 
As mentioned earlier, the goal of the pension scheme manager is to maximise the expected utility from the surplus between the fund level and the minimum guarantee at retirement time $T$. Thus, for a given investment strategy $\alpha,$ the manager's objective function for the utility maximisation problem is
\eqstar{J(t,f,z;\alpha)=\Eb\left[U\Big(F(T)-G(T)\Big)\right], \quad 0< t \leq T,
} 
where $U:\Rb_+ \to \Rb_+$ is a utility function, $F(t) =f >0$ and $z(t) = z$. 

In the literature, many different utility functions are considered in the problems related to optimal investment. Among the several choices, hyperbolic absolute risk aversion (HARA) class of utility functions is the most commonly used. Constant relative risk aversion (CRRA), constant absolute risk aversion (CARA), and quadratic utility are all types of HARA functions that have been used in previous works. For instance, \cite{gao2008stochastic} and \cite{boulier2001optimal} used CRRA utility function to study the optimal asset allocation problems for DC schemes. \cite{battocchio2004optimal} and \cite{cairns2000some} adopted CARA and quadratic utilities, respectively. In this paper, we use the power utility function (CRRA):
\eqlnostar{eq:utility}{
U(x)=\frac{x^{1-\gamma}}{1-\gamma},
}
where $\gamma>0$ and $\gamma\not=1$. In the case when $\gamma =1$, $U(x)=\ln x$ is the log-utility function. 

Our choice of the power utility function is motivated by two reasons. First, pension schemes usually manage a large amount of money. With an increasing or decreasing relative risk aversion, the fraction of the wealth invested in risky assets is affected by the level of wealth. However, the power utility function has a constant relative risk aversion and the investment strategy is not affected by scale. Second, our optimisation problem is analytically tractable when using the power utility function. We lose the analytical tractability for other types of utility functions even if we can numerically solve the optimisation problem using our approach. 

Thus, we formulate the scheme manager's utility maximisation problem as
\eqlnostar{eq:problem1}{
\underset{\alpha \in\Ac}{\sup} \, \Eb \left[\frac{(F(T)-G(T))^{1-\gamma}}{1-\gamma} \right] \text{ such that } F(T) \geq G(T) \, \text{a.s.}.
}
In the above, the set $\Ac$ denotes the set of all \textit{admissible} strategies which are defined as below.
\begin{definition}
A portfolio strategy $\{\alpha (t)  \in \Rb^3\mid t \in [0,T] \}$ is called admissible if $\alpha(t)$ is progressively measurable with respect to $\Fc$ and $\Eb \left[\int_0^T |\alpha(t)|^2 \dd t\right] < \infty$.
\end{definition} 
In our setting, the representative member continuously contributes to the scheme during the accumulation phase. The $c\dd t$ term in \eqref{eq:fundwealth} reveals that the wealth process $F(t)$ is not self-financing. Besides, at the retirement time $T$, there is a minimum guarantee $G(T)$ to be  met. This means that the proposed problem (\ref{eq:problem1}) is not a classical Merton's optimal investment problem. To solve this non-self-financing constrained problem, we transfer it into a self-financing investment portfolio optimisation problem by introducing an auxiliary surplus process. We then solve the transformed problem using the dynamic programming principle.

\subsection{Single investment portfolio optimisation problem}
\label{sec:uncostrained}
Inspired by \cite{boulier2001optimal}, we split the scheme's wealth into two parts: one part is the future contributions to be paid and the other part is a self-financing portfolio. For any $t\in[0,T]$, denoting by $D(t)$ the present value of future contributions until retirement time $T$, we can write
\eqlnostar{eq:future contributions}{
D(t) =\Ebt\left[\int_t^T c\frac{R(t)}{R(s)}\dd s\Biggm| \Fc(t)\right].
}
$D(t)$ can be viewed as a coupon-paying bond that pays a continuous coupon at rate $c$ until time $T$. Thus, $D(t)$ can be replicated by investing in the rolling bond and the money market.
\begin{proposition}
\label{prop:D}
For any $t\in [0,T]$, $D(t)$ in \eqref{eq:future contributions} can be replicated as
\eqlnostar{eq:contribution}{
    \dd D(t)=-c\dd t+\alpha_0^B(t)\frac{\dd R (t)}{R (t)}+\alpha_B ^D (t)\frac{\dd B(t)}{B(t)},
} where 
\eqlnostar{strategyD}{
\alpha_B ^D (t)=\frac{ c\int_t ^TB(t,s)f_1 (t,s)\dd s}{f_1 (t,t+T_B )},& &\alpha_0^B(t) = D (t)-\alpha_B^D(t),
} 
are the holdings in rolling bond and money market, respectively. 
\end{proposition}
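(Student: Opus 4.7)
\textbf{Proof proposal for Proposition \ref{prop:D}.}

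The plan is to compute the dynamics of $D(t)$ directly by evaluating the conditional expectation, then to compare the resulting SDE with the prescribed self-financing representation in \eqref{eq:contribution} and read off the holdings.

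First, I would observe that since $c$ is deterministic and, by Fubini, one may interchange the integral over $s$ with the conditional expectation, the definition \eqref{eq:future contributions} gives
\eqstar{
D(t)=c\int_t^T \Ebt\left[e^{-\int_t^s r(u)\dd u}\biggm|\Fc(t)\right]\dd s = c\int_t^T B(t,s)\dd s,
}
using the bond pricing formula. Thus $D(t)$ is simply a weighted integral of zero-coupon bond prices, which is a familiar coupon-bond construction.

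Next, I would differentiate $D(t)$ in $t$. The upper boundary term contributes nothing and the lower boundary contributes $-cB(t,t)\dd t=-c\dd t$ since $B(t,t)=1$. Inside the integral, I use the dynamics of $B(t,s)$ derived in the excerpt:
\eqstar{
\frac{\dd B(t,s)}{B(t,s)}=\bigl(r(t)+\theta_r\sqrt{r(t)}\sigma_B(t,s)\bigr)\dd t+\sigma_B(t,s)\dd W_1(t),
}
so that
\eqstar{
\dd D(t)=-c\dd t + c\int_t^T B(t,s)\bigl(r(t)+\theta_r\sqrt{r(t)}\sigma_B(t,s)\bigr)\dd s\,\dd t + c\int_t^T B(t,s)\sigma_B(t,s)\dd s\,\dd W_1(t).
}

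Now I would match this with the proposed replication \eqref{eq:contribution}. Expanding the right-hand side using the dynamics of $R$ and of the rolling bond $B(t)$ from \eqref{eq:rollingbond}, the diffusion term is $\alpha_B^D(t)\sigma_B(t,t+T_B)\dd W_1(t)$, so the diffusion match requires
\eqstar{
\alpha_B^D(t)\,\sigma_B(t,t+T_B)=c\int_t^T B(t,s)\sigma_B(t,s)\dd s.
}
Using $\sigma_B(t,s)=-f_1(t,s)\sigma_r\sqrt{r(t)}$, the factor $-\sigma_r\sqrt{r(t)}$ cancels from both sides and I immediately obtain the stated expression for $\alpha_B^D(t)$. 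Matching the $r(t)\dd t$ coefficient forces $\alpha_0^B(t)+\alpha_B^D(t)=D(t)$, which yields $\alpha_0^B(t)=D(t)-\alpha_B^D(t)$. Finally I would check consistency of the $\theta_r\sqrt{r(t)}$ drift, which follows automatically from the same ratio argument (equivalently from no-arbitrage, since the discounted $D$ is a $\Pbt$-martingale and the replicating portfolio must reproduce the same drift under $\Pb$).

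The computation is essentially mechanical; the only subtlety to handle carefully is the differentiation under the integral sign, which is justified by the smoothness and integrability of $B(t,s)$ on the compact interval $[t,T]$ together with the Feller positivity of $r(t)$ ensuring $B(t,s)$ and $\sigma_B(t,s)$ have the required integrability. Once this is in place, the identification of the hedge ratio is a one-line consequence of the affine form of $\sigma_B$.
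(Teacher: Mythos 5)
Your argument is correct and follows essentially the same route as the paper's own proof: rewrite $D(t)=c\int_t^T B(t,s)\,\dd s$ via Fubini, differentiate under the integral (Leibniz rule, with the boundary term producing $-c\,\dd t$), insert the zero-coupon bond dynamics, and identify the holdings by matching diffusion and drift coefficients against the rolling bond and money market dynamics. The only difference is presentational — you spell out the separate matching of the $\dd W_1$, $r(t)\dd t$ and $\theta_r\sqrt{r(t)}\,\dd t$ terms, which the paper compresses into "comparing the coefficients" — so no gap remains.
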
 
See Appendix \ref{appendix:dD} for the proof. 

Our next step is to transform the problem \eqref{eq:problem1} to a simplified portfolio optimisation problem with the following steps. First, we construct a replicating portfolio for $G(T)$. At time $t\in[0,T]$, the present value of $G(T)$ is given by 
\eqlnostar{eq:definition guarantee}{
G(t) =& \widetilde{\Eb} \left[G(T)\frac{R(t)}{R(T)}\Biggm| \Fc(t)\right].
}
In Proposition \ref{prop:D}, we replicate $D(t)$ by using cash and rolling bond since the interest rate risk is the only risk to hedge. We show in the following proposition that $G(t)$ can be similarly replicated by investing in bond, longevity bond and money market.
\begin{proposition}
\label{prop:G}
For any $t\in[0,T]$, $G(t)$ in \eqref{eq:definition guarantee} can be replicated as
\eqlnostar{eq:guarantee}{
\dd G(t)=&\alpha_0^G (t)\frac{\dd R(t)}{R(t)}+\alpha_B^G (t)\frac{\dd B(t)}{B(t)}+\alpha_L^G (t)\frac{\dd L(t)}{L(t)},
}
where
\eqlnostar{strategyG}{
\alpha_L^G (t)=&\frac{\pi  \int_T^\infty L(t,s)h_1(t,s)\dd s}{h_1(t,t+T_L)},\\
\alpha_B^G(t)=&\frac{\pi  \int_T^\infty L(t,s)f_1(t,s)\dd s}{f_1(t,t+T_B)}-\alpha_L^G (t)\frac{f_1(t,t+T_L)}{f_1(t,t+T_B)},\\
\alpha_0^G(t) = &G(t)-\alpha_B^G(t)-\alpha_L^G(t).
} 
are the holdings in rolling longevity bond, rolling bond and money market, respectively.
\end{proposition}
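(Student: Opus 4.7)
The plan is to express $G(t)$ as an integral of zero-coupon longevity bond prices and then apply It\^o's formula to identify the replicating positions. First, using the tower property together with Fubini's theorem to interchange the conditional expectation and the outer integral in \eqref{eq:definition guarantee}, I would write
\eqstar{
G(t) = \Ebt\left[\int_T^\infty \pi \frac{R(t)}{R(s)} p(s) \dd s \biggm| \Fc(t)\right] = \pi \int_T^\infty L(t,s) \dd s,
}
where the second equality uses the very definition of $L(t,s)$ given in Section \ref{sec:framework}. The interchange is justified because the explicit affine representation \eqref{eq:affineL}--\eqref{eq:bondh} shows that $L(t,s)$ decays fast enough in $s$ for the tail integral to converge.

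Next, I would differentiate via a Leibniz-type rule for stochastic integrals to obtain $\dd G(t) = \pi \int_T^\infty \dd L(t,s) \dd s$, substitute the SDE for $L(t,s)$, and read off the diffusion coefficients in the $\dd W_1$ and $\dd W_2$ directions, namely
\eqstar{
-\pi \sigma_r \sqrt{r(t)} \int_T^\infty L(t,s) f_1(t,s)\dd s\quad\text{and}\quad -\pi \sigma_\lambda \sqrt{\lambda(t)} \int_T^\infty L(t,s) h_1(t,s)\dd s,
}
respectively. On the other hand, the candidate self-financing portfolio $\alpha_0^G \dd R/R + \alpha_B^G \dd B/B + \alpha_L^G \dd L/L$ has a $\dd W_2$ coefficient equal to $\alpha_L^G(t)\sigma_L^\lambda(t,t+T_L) = -\alpha_L^G(t) h_1(t,t+T_L)\sigma_\lambda\sqrt{\lambda(t)}$. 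Equating this with the $\dd W_2$ coefficient of $\dd G(t)$ directly yields the stated formula for $\alpha_L^G$. Similarly, matching the $\dd W_1$ coefficients, which receive contributions from both the rolling bond and the rolling longevity bond, and substituting $\alpha_L^G$ gives the expression for $\alpha_B^G$. Finally, the self-financing relation forces the total holdings to sum to the portfolio value, giving $\alpha_0^G(t) = G(t) - \alpha_B^G(t) - \alpha_L^G(t)$.

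It remains to verify that the drift of the replicating portfolio also coincides with that of $G(t)$. This is essentially automatic by no-arbitrage, since $G(t)/R(t)$ is a $\Pbt$-martingale and the portfolio is self-financing, but can be confirmed directly: the portfolio's drift under $\Pb$ equals $G(t) r(t)$ plus market-price-of-risk corrections along each Brownian direction, which collapse to the drift of $\pi\int_T^\infty \dd L(t,s)\dd s$ once the diffusion-matching relations derived above are used. The main technical obstacle is justifying the stochastic Leibniz interchange; I would handle this by noting that the family $\{L(\cdot,s)\}_{s\geq T}$ is jointly progressively measurable and that the affine form provides uniform-in-$s$ integrability bounds on $L(t,s)$ and on the coefficients $f_1(t,s)$, $h_1(t,s)$ appearing in its diffusion, so that classical Itô-Fubini results apply.
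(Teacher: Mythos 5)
Your proposal is correct and takes essentially the same route as the paper's own proof: rewrite $G(t)=\pi\int_T^\infty L(t,s)\,\dd s$ via the tower property and Fubini, differentiate under the integral (Leibniz/It\^o--Fubini), substitute the SDE for $L(t,s)$, and identify $\alpha_L^G$, $\alpha_B^G$, $\alpha_0^G$ by matching the $\dd W_2$ and $\dd W_1$ coefficients against the rolling longevity bond and rolling bond dynamics. Your extra remarks on drift consistency and on justifying the stochastic interchange merely make explicit steps the paper treats implicitly.
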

See Appendix \ref{appendix:dG} for the proof.

Finally, we construct an auxiliary process $Y(t)$ which is the surplus of the terminal wealth over the minimum guarantee,  that is, $Y(t)=F(t)+D(t)-G(t)$. At retirement time $T$, from \eqref{eq:future contributions}, we see that $D(T)=0$ and we have $Y(T)=F(T)-G(T)$. From \eqref{eq:fundwealth}, \eqref{eq:contribution} and \eqref{eq:guarantee}, we obtain the following equation
\eqstar{
    \dd Y(t)=&\dd F(t)+\dd D(t)-\dd G(t)\\
    =&\alpha_0^Y(t)\frac{\dd R(t)}{R(t)}+\alpha_B^Y (t)\frac{\dd B(t)}{B(t)}+\alpha_L^Y (t)\frac{\dd L(t)}{L(t)}+\alpha_S^Y(t)\frac{\dd S(t)}{S(t)},
}
where
\eqlnostar{eq:strategyY}{
\alpha_B^Y (t)=\alpha_B (t)+\alpha_B^D (t)-\alpha_B^G (t), && \alpha_L^Y (t)=\alpha_L (t)-\alpha_L^G(t),\nonumber\\
\alpha_S^Y (t)=\alpha_S (t), && \alpha_0^Y(t)=Y(t)-\alpha_B^Y(t)-\alpha_L^Y (t)-\alpha_S^Y (t).
}
For any $t\in[0,T]$, let 
\eqstar{
\alpha^D (t)=[\alpha_B^D (t),0,0]^\prime, && \alpha^G (t)=[\alpha_B^G (t),\alpha_L^G (t),0]^\prime, && \alpha^Y (t)=[\alpha_B^Y (t), \alpha_L^Y (t),\alpha_S^Y (t)]^\prime.
}
Then, we have 
\eqlnostar{strategy}{
\alpha^Y(t)=\alpha(t)+\alpha^D(t)-\alpha^G(t).
}
The dynamics of the surplus process $Y(t)$ can be then written as
\eqlnostar{eq:Y}{
\dd Y(t)=\left(r(t)Y(t)+\alpha^Y (t)^\prime M(t)\right)\dd t+\alpha^Y (t)^\prime \Sigma(t)^\prime \dd W(t).
}

Thus, our simplified portfolio optimisation problem is given as
\eqlnostar{eq:problem3}{
     &\underset{\alpha^Y  \in\Ac}{\sup} \, \Eb\left[\frac{Y(T)^{1-\gamma}}{1-\gamma} \right] \text{ such that } Y(T) \geq 0 \, \text{a.s.}.
} 

\begin{lemma} 
\label{remark:equivalent}
For any $t\in[0,T]$, if $\alpha^Y(t)\in \Ac$, then $\alpha(t)\in\Ac,$ and the optimisation problems \eqref{eq:problem1} and \eqref{eq:problem3} are equivalent.
\end{lemma}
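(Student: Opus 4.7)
The plan is to establish two facts, corresponding to the two claims of the lemma. First, I would dispose of the terminal bookkeeping: since $D(T) = 0$ by \eqref{eq:future contributions}, the auxiliary process satisfies $Y(T) = F(T) + D(T) - G(T) = F(T) - G(T)$, so the event $\{F(T) \geq G(T)\}$ coincides with $\{Y(T) \geq 0\}$ and the terminal utilities agree value by value along the correspondence \eqref{strategy}. Combined with the admissibility step below, this immediately identifies the two optimisation problems.

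Next I would tackle the admissibility correspondence, which is the substantive part. From \eqref{strategy} we have $\alpha(t) = \alpha^Y(t) - \alpha^D(t) + \alpha^G(t)$, and by Propositions \ref{prop:D} and \ref{prop:G}, both $\alpha^D$ and $\alpha^G$ are explicit $\Fc$-adapted functionals of the affine state $(r(t), \lambda(t))$, hence progressively measurable. Minkowski's inequality then reduces the $L^2$-integrability of $\alpha$ to that of $\alpha^D$ and $\alpha^G$ on $[0,T]$. For $\alpha^D$ in \eqref{strategyD}, the inequality $0 \leq B(t,s) \leq 1$, which holds since $r \geq 0$ under CIR, together with the uniform boundedness of $f_1$ on the compact range $\{0 \leq t \leq s \leq T\}$ from \eqref{eq:bondf}, yields a deterministic bound $|\alpha_B^D(t)| \leq K_1$. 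For $\alpha^G$ in \eqref{strategyG}, I would write $L(t,s)$ via \eqref{eq:affineL}--\eqref{eq:bondh} and check that $f_0(t,s)$ diverges linearly to $-\infty$ in $s$ (since $\eta_r > \tilde{b}_r$ yields a negative exponent in \eqref{eq:bondf}) and $h_0(t,s)$ diverges to $-\infty$ at least exponentially in $s$, because the Gompertz--Makeham coefficient $a_\lambda(u)$ in \eqref{eq:al} grows like $e^{u/b}$ while $h_1$ has a positive lower bound on a tail. This produces a deterministic dominating envelope $L(t,s) \leq \Phi(s)$, with $\Phi$ integrable on $[T,\infty)$ and independent of the random path of $(r, \lambda)$ on $[0,T]$, which bounds the improper integrals in \eqref{strategyG} pathwise. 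The reverse implication $\alpha \in \Ac \Rightarrow \alpha^Y \in \Ac$ is identical by symmetry.

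The main obstacle is constructing the deterministic envelope $\Phi$ for $L(t,s)$ on the unbounded range $s \geq T$, since it requires separating the stochastic dependence on $(r(t), \lambda(t))$ from the asymptotic decay of the affine coefficients $f_0$ and $h_0$, and ensuring that the exponent $-f_1(t,s) r(t) - h_1(t,s) \lambda(t)$ contributes only a bounded multiplicative factor independently of $s$. Once this envelope is in hand, the bijective affine shift $\alpha \leftrightarrow \alpha^Y$ combined with the terminal identification above yields the equivalence of \eqref{eq:problem1} and \eqref{eq:problem3} at once: any admissible strategy of one problem maps to an admissible strategy of the other with the same objective value and feasibility constraint at time $T$.
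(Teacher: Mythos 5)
Your proposal is correct and follows essentially the same route as the paper: reduce the admissibility of $\alpha$ to the square-integrability of $\alpha^D$ and $\alpha^G$ via the affine shift \eqref{strategy}, bound $B(t,s)$ and $L(t,s)$ to control the integrals in \eqref{strategyD} and \eqref{strategyG}, and use $D(T)=0$ to identify the terminal constraint and objective. Your explicit integrable envelope for $L(t,s)$ (using $f_1,h_1\geq 0$, $r,\lambda>0$ and the linear divergence of $-f_0(t,s)$ in $s$) is simply a sharper, more quantitative justification of the tail-integral convergence that the paper handles by continuity and the Cauchy criterion, so it is the same argument in substance.
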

\begin{proof}
It is clear from \eqref{strategy} that we need to prove the admissibility condition for deterministic functions $\alpha^D(t)$ and $\alpha^G(t).$ Now for any fixed $t\in[0,T]$ and any $s\in[T,\infty)$, $f_1(t,s)$ and $B(t,s)$ are continuous functions. It is easy to see that $\Eb\left[\int_0^T |\alpha^D(t)|^2\dd t\right]<+\infty$. $h_1(t,s)$ and $L(t,s)$ are also continuous functions. Moreover, as $s\to \infty$, functions $L(t,s)f_1(t,s) \to 0,$ and $L(t,s)h_1(t,s) \to 0.$ Then for any $\epsilon>0$, there exists a $K\ge T>0$ such that for all $ K_2>K_1\geq K$, we have 
\eqstar{
\Big|\int_{K_1}^{K_2} L(t,s)f_1(t,s)\dd s \Big|<\epsilon,&&\Big|\int_{K_1}^{K_2} L(t,s)h_1(t,s)\dd s \Big|<\epsilon.}
According to the Cauchy criterion, $\int_T^\infty L(t,s)f_1(t,s)\dd s$, $\int_T^\infty L(t,s)h_1(t,s)\dd s$ are then convergent. Thus, $\Eb\left[\int_0^T |\alpha^G(t)|^2\dd t\right]<+\infty$. Therefore, if $\Eb\left[\int_0^T |\alpha^Y(t)|^2 \dd t\right]<+\infty$, then $\Eb\left[\int_0^T |\alpha(t) |^2\dd t\right]<+\infty$. This means that if $\alpha^Y(t)\in \Ac$, then $\alpha(t)\in \Ac$. We have $D(T)=0$, thus if $Y(T)\ge 0$ a.s., then $F(T)-G(T)\geq 0$ a.s.. Furthermore, since $F(t)=Y(t)-D(t)+G(t)$ and \eqref{strategy} holds, ${\alpha^Y}^\star(t)$ leads to the optimal strategy $\alpha^\star(t)$ which concludes the argument.
\end{proof}
We see that the wealth process \eqref{eq:Y} is self-financing. Once we are able to solve \eqref{eq:problem3} and obtain the unique optimal control ${\alpha^Y}^\star(t)$, we can use \eqref{strategyD}, \eqref{strategyG} and \eqref{strategy} to obtain $\alpha^\star(t)$.

\subsection{The optimal solution}
\label{sec:solution}
We define the value function of our simplified problem
\eqref{eq:problem3} as 
\eqstar{V(t,y,z):=\underset{\alpha^Y \in\Ac}{\sup} \,\Eb\left[\frac{Y(T)^{1-\gamma}}{1-\gamma}\right],
} 
with terminal condition $V(T,y,z)=\frac{Y(T)^{1-\gamma}}{1-\gamma}$. We assume that the value function $V(t,y,z) \in C^{1,2,2}([0,$ $T] \times \Rb^3_+).$ Then, by following the usual dynamic programming principle (see, for example, \citet[Chapter 3]{pham2009continuous}), $V$ satisfies the following Hamilton-Jacobi-Bellman (HJB) equation 
\eqlnostar{eq:hjb1}{
0=V_t(t,y,z)+\underset{\alpha^Y \in \Rb^3}{\sup}\Lc^{\alpha^Y}V(t,y,z)
}
where 
\eqstar{
\Lc^{\alpha^Y}V = \left[V_y (ry+{\alpha^Y}^\prime M)+\mu^\prime V_z+\frac{1}{2}\text{tr}(\xi^\prime \xi V_{zz} )+\frac{1}{2}{\alpha^Y}^\prime\Sigma^\prime \Sigma\alpha^Y V_{yy}+{\alpha^Y}^\prime\Sigma^\prime \xi V_{yz}\right].
} 
$V_t$, $V_y,\ V_{yy},\ V_z,\ V_{zz}\ \text{and}\ V_{yz}$ are the first and second order partial derivatives with respect to $t, y, z$. In particular, $V_z=\left[V_r, V_\lambda\right]^\prime$, $V_{yz}=\left[V_{yr},V_{y\lambda} \right]^\prime$ and $V_{zz}=\left[ \left[V_{rr},V_{\lambda r}\right]^\prime,\left[V_{r\lambda }, V_{\lambda\lambda}\right]^\prime \right]$. Solving the first order condition on $\alpha^Y$, we obtain the unique investment strategy as
\eqlnostar{eq:fod}{
{\alpha^Y}^\star=-\frac{V_y}{V_{yy}} (\Sigma^\prime \Sigma)^{-1} M-\frac{1}{V_{yy}} \Sigma^{-1}  \xi V_{yz}.
}            
Substituting \eqref{eq:fod} in \eqref{eq:hjb1}, we obtain 
\eqlnostar{eq:hjb}{
    0=&V_t+V_y ry-\frac{1}{2}\frac{{V_y}^2}{V_{yy}} M^\prime (\Sigma^\prime \Sigma)^{-1} M
    -\frac{V_y}{V_{yy}} M^\prime \Sigma^{-1} \xi V_{yz}+\mu^\prime V_z\\
    &+\frac{1}{2}tr(\xi^\prime \xi V_{zz})-\frac{1}{2} \frac{1}{V_{yy}} {V_{yz}}^\prime \xi^\prime \xi V_{yz}.
}
Once we solve the value function $V(t,y,z)$ in \eqref{eq:hjb}, we can obtain the optimal control ${\alpha^Y}^\star(t)$. The following proposition gives the explicit optimal investment strategy for the transformed problem \eqref{eq:problem3}.

\begin{proposition}
\label{prop:problem3}
For any $t\in[0,T]$, with a risk-aversion parameter 
\eqlnostar{eq:gamma condition}{
\gamma>\max\left\{\frac{2\sigma_r^2+\sigma_r^2 \theta_r^2+2b_r \theta_r \sigma_r}{(b_r+\theta_r\sigma_r)^2+2\sigma_r^2},\ \frac{2b_\lambda \theta_\lambda \sigma_\lambda +\sigma_\lambda^2 \theta_\lambda^2}{(b_\lambda+\theta_\lambda\sigma_\lambda)^2}\right\}
}  and under the financial market setting \eqref{eq:market}--\eqref{eq:state}, the optimal solution to \eqref{eq:problem3} is given as 
\eqstar{
{\alpha^Y_B}^\star(t)=&\frac{\theta_S\sigma_S^r -\theta_r\sigma_S-\sigma_S\sigma_rA_1(t,T)}{\gamma\sigma_S\sigma_r f_1 (t,t+T_B )}Y(t)-\frac{f_1 (t,t+T_L ) }{f_1 (t,t+T_B )}{\alpha^Y_L}^\star(t),\\
{\alpha^Y_L}^\star(t)=&-\frac{\theta_\lambda+\sigma_\lambda A_2(t,T)}{\gamma\sigma_\lambda h_1 (t,t+T_L)}Y(t),\\
{\alpha^Y_S}^\star(t)=&\frac{\theta_S}{\gamma\sigma_S }Y(t),\quad{\alpha^Y_0}^\star(t)=Y(t)-{\alpha^Y_B}^\star(t)-{\alpha^Y_L}^\star(t)-{\alpha^Y_S}^\star(t),
}
where 
\eqlnostar{eq:A}{
    &\left\{\begin{array}{ll}
         &A_1 (t,T)=\frac{a_{11} a_{12}  \exp(-\sqrt{\Delta_1}(T-t)) -a_{11} a_{12}}{a_{12} \exp(-\sqrt{\Delta_1}(T-t))-a_{11}},\\
         &\Delta_1 =b_r^2+\frac{\gamma-1}{\gamma}\left(2\sigma_r^2 +\theta_r^2\sigma_r^2+2b_r\theta_r\sigma_r\right),\\
         &a_{11,12}=\frac{(\gamma-1)\theta_r\sigma_r +b_r \gamma\pm \gamma\sqrt{\Delta_1}}{\sigma_r^2},
    \end{array}\right.\nonumber\\
    &\left\{\begin{array}{ll}
         &A_2 (t,T)=\frac{a_{21} a_{22} \exp(-\sqrt{\Delta_2}(T-t)) -a_{21} a_{22}}{a_{22} \exp(-\sqrt{\Delta_2}(T-t))-a_{21}},\\
         &\Delta_2=b_\lambda^2+\frac{\gamma-1}{\gamma}(2b_\lambda\theta_\lambda\sigma_\lambda+\theta_\lambda^2\sigma_\lambda^2),\\
         &a_{21,22}=\frac{(\gamma-1)\theta_\lambda \sigma_\lambda+b_\lambda \gamma\pm \gamma\sqrt{\Delta_2}}{\sigma_\lambda^2}, 
    \end{array}\right.\nonumber\\
    &A_0(t,T)=\int_t^T \left(a_r A_1 (s,T)+a_\lambda(s) A_2(s,T)+\frac{1-\gamma}{2\gamma}\theta_S^2\right)\dd s.
}
\label{Prop:solution1}
\end{proposition}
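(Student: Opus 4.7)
The plan is to solve the HJB equation \eqref{eq:hjb} via a separable ansatz compatible with the power utility and the affine structure of $z(t)=[r(t),\lambda(t)]^\prime$. Motivated by the CRRA form and the affine dynamics, I would try
\eqstar{
V(t,y,z)=\frac{y^{1-\gamma}}{1-\gamma}\exp\!\bigl(A_0(t,T)+A_1(t,T)\,r+A_2(t,T)\,\lambda\bigr),
}
with terminal conditions $A_0(T,T)=A_1(T,T)=A_2(T,T)=0$ coming from $V(T,y,z)=\tfrac{y^{1-\gamma}}{1-\gamma}$. The CRRA factorisation gives $V_y/V_{yy}=-y/\gamma$ and $V_{yz}/V_y=(A_1,A_2)^\prime$, which makes every term in \eqref{eq:hjb} proportional to $V$ times a factor that is at worst polynomial in $(r,\lambda)$, so the $y$-dependence drops out after dividing through by $(1-\gamma)V$.

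Next, I would substitute the ansatz into \eqref{eq:hjb} and expand the quadratic form $M^\prime(\Sigma^\prime\Sigma)^{-1}M$ and the cross term $M^\prime\Sigma^{-1}\xi\,V_{yz}$ using the block-lower-triangular structure of $\Sigma$. The crucial observation is that the $\sqrt{r}$ and $\sqrt{\lambda}$ factors in $M$ and $\Sigma$ combine so that $M^\prime(\Sigma^\prime\Sigma)^{-1}M$ is affine in $(r,\lambda)$, and $M^\prime\Sigma^{-1}\xi$ produces terms proportional to $r$ (paired with $A_1$) and $\lambda$ (paired with $A_2$), while $\tfrac12\mathrm{tr}(\xi^\prime\xi V_{zz})$ and $\tfrac12 V_{yz}^\prime\xi^\prime\xi V_{yz}/(-V_{yy})$ also split as $r\cdot(\cdot)+\lambda\cdot(\cdot)$. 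Matching coefficients of $r$, $\lambda$, and $1$ then decouples the PDE into three ODEs: a Riccati equation for $A_1$ involving only $(b_r,\sigma_r,\theta_r,\gamma)$, a Riccati equation for $A_2$ involving only $(b_\lambda,\sigma_\lambda,\theta_\lambda,\gamma)$, and a first-order linear ODE for $A_0$ driven by $a_r A_1(s,T)+a_\lambda(s)A_2(s,T)+\tfrac{1-\gamma}{2\gamma}\theta_S^2$.

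The two Riccati equations take the generic form $A_i^\prime+\gamma_i A_i^2+\beta_i A_i+\alpha_i=0$ with $A_i(T)=0$, whose discriminants are precisely $\Delta_1,\Delta_2$ as stated. Under the hypothesis \eqref{eq:gamma condition}, $\Delta_1,\Delta_2>0$, which guarantees real roots $a_{i1},a_{i2}$ and a non-singular explicit solution on $[0,T]$ of the given closed form. Integrating gives $A_0(t,T)$ as in \eqref{eq:A}. Plugging $V_y/V_{yy}=-y/\gamma$ and $V_{yz}=V_y(A_1,A_2)^\prime$ back into the first-order condition \eqref{eq:fod} produces
\eqstar{
{\alpha^Y}^\star=\frac{Y}{\gamma}(\Sigma^\prime\Sigma)^{-1}M+\frac{Y}{\gamma}\Sigma^{-1}\xi\,(A_1,A_2)^\prime,
}
and working out the three components using the explicit block inverses of $\Sigma^\prime\Sigma$ (which cancels the $\sqrt{r}$, $\sqrt{\lambda}$ factors against those in $M$ and $\xi$) recovers the stated formulas for ${\alpha^Y_B}^\star,{\alpha^Y_L}^\star,{\alpha^Y_S}^\star$, with ${\alpha^Y_0}^\star$ defined by the wealth constraint.

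The main obstacle, in my view, is twofold. First, the bookkeeping in computing $M^\prime(\Sigma^\prime\Sigma)^{-1}M$ and $\Sigma^{-1}\xi$ is delicate: one must verify that the terms rearrange cleanly into an affine function of $(r,\lambda)$ so the ansatz actually closes. Second, the condition \eqref{eq:gamma condition} must be derived from requiring $\Delta_1,\Delta_2>0$ \emph{and} the denominator $a_{i2}\exp(-\sqrt{\Delta_i}(T-t))-a_{i1}$ remaining non-vanishing on $[0,T]$, so the Riccati solution is well-defined and $V$ finite. Once both of these are settled, a standard verification argument (e.g., \citet[Chapter~3]{pham2009continuous}) — using the admissibility from Lemma \ref{remark:equivalent}, nonnegativity of $Y(T)$ under the candidate strategy, and a localisation of the stochastic integral — confirms that the candidate value function equals $V(t,y,z)$ and that ${\alpha^Y}^\star$ is optimal.
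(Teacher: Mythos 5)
Your proposal is correct and follows essentially the same route as the paper's proof: the exponential-affine ansatz $V=\frac{y^{1-\gamma}}{1-\gamma}e^{A_0+A_1 r+A_2\lambda}$, reduction of the HJB equation to two Riccati ODEs for $A_1,A_2$ plus a linear ODE for $A_0$ by matching coefficients of $r$, $\lambda$ and the constant, with \eqref{eq:gamma condition} ensuring $\Delta_1,\Delta_2>0$, and the optimal control recovered from the first-order condition \eqref{eq:fod} using $V_y/V_{yy}=-y/\gamma$ and $V_{yz}=V_y[A_1,A_2]^\prime$. Your additional remarks on the non-vanishing Riccati denominator and the final verification step go slightly beyond what the paper writes out, but they do not change the argument.
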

We include the proof in Appendix \ref{sec:appendixc}. 

Next, we verify the admissibility of the optimal control ${\alpha^Y}^\star(t)$.
\begin{lemma}
\label{lemma:verify}
For any $t\in[0,T]$, the optimal strategy ${\alpha^Y}^\star(t)$ in Proposition \ref{Prop:solution1} belongs to the admissible set $\Ac$.
\end{lemma}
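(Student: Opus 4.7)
The plan is to verify the two requirements of admissibility: progressive measurability and the square-integrability bound $\Eb[\int_0^T|\alpha^{Y\star}(t)|^2\,dt]<\infty$. Direct inspection of Proposition \ref{Prop:solution1} shows that each component of $\alpha^{Y\star}(t)$ has the form $\beta_i(t)\,Y(t)$, where $\beta_i(t)$ is a purely deterministic function of $t$ built from the model parameters together with $A_1(t,T)$, $A_2(t,T)$, $f_1(t,t+T_B)$, $f_1(t,t+T_L)$ and $h_1(t,t+T_L)$. Since $Y$ is the continuous, $\Fc$-adapted solution of the linear-in-$Y$ SDE obtained by substituting $\alpha^{Y\star}$ back into \eqref{eq:Y}, progressive measurability is automatic and it only remains to establish the integrability bound.

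I would first argue that each $\beta_i$ is bounded on $[0,T]$. The risk-aversion condition \eqref{eq:gamma condition} is chosen precisely so that $\Delta_1>0$ and $\Delta_2>0$, and under this condition the denominators appearing in the expressions \eqref{eq:A} for $A_1(t,T)$ and $A_2(t,T)$ do not vanish on $[0,T]$; hence both maps are continuous on the compact interval $[0,T]$. From the closed-form expressions in \eqref{eq:bondf} and \eqref{eq:bondh}, the deterministic maps $t\mapsto f_1(t,t+T_B),\,f_1(t,t+T_L),\,h_1(t,t+T_L)$ are likewise continuous and uniformly bounded away from zero on $[0,T]$. Combining these observations, each $\beta_i$ is continuous, hence bounded by some $K<\infty$, on $[0,T]$, giving the pointwise estimate $|\alpha^{Y\star}(t)|^2\le K^2\,Y(t)^2$.

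It therefore suffices to prove $\sup_{t\in[0,T]}\Eb[Y(t)^2]<\infty$. Substituting the optimal strategy into \eqref{eq:Y} yields a linear SDE $dY(t)=Y(t)[A(t)\,dt+B(t)^\prime dW(t)]$, in which $A(t)$ is affine in $(r(t),\lambda(t))$ with bounded deterministic coefficients, and each entry of $B(t)$ is a bounded deterministic function multiplied by $\sqrt{r(t)}$, $\sqrt{\lambda(t)}$, or a constant. Applying It\^o to $Y^2$ and taking expectation after a standard localization yields the Volterra-type identity $\Eb[Y(t)^2]=Y(0)^2+\int_0^t\Eb[Y(s)^2\,\Phi(s,r(s),\lambda(s))]\,ds$, where $\Phi(s,r,\lambda)\le c_0+c_1 r+c_2 \lambda$ for positive constants $c_0,c_1,c_2$ depending only on $T$ and the model parameters.

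The main obstacle is to control this coupled moment uniformly on $[0,T]$. I would handle it via a change of measure: the stochastic exponential driving the linear SDE for $Y$ is a true martingale (Novikov's condition is verified using the boundedness of $\beta$ together with the finiteness of CIR exponential moments), and under the resulting equivalent measure $\Pbh$ the processes $r$ and $\lambda$ remain independent affine diffusions of CIR type with only a linear shift in their mean-reversion coefficients. This gives the dominating bound $\Eb[Y(t)^2]\le Y(0)^2\,\Ebh\bigl[\exp\bigl(\int_0^t(c_0+c_1 r(s)+c_2 \lambda(s))\,ds\bigr)\bigr]$, and the joint exponential moment on the right-hand side is finite on $[0,T]$ by the classical affine-transform formula, provided the associated Riccati ODEs for the exponents do not blow up before $T$. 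Non-explosion follows from the Feller conditions $2a_r>\sigma_r^2$ and $2a_\lambda(t)>\sigma_\lambda^2$ combined with the finiteness of $T$. A final Gronwall step closes the uniform bound and completes the admissibility verification.
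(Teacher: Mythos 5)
Your overall skeleton coincides with the paper's: both arguments rest on the observation that ${\alpha^Y}^\star(t)=\beta(t)\,Y(t)$ with $\beta$ a continuous, hence bounded, deterministic vector on $[0,T]$ (in fact $f_1(t,t+T_B)$, $f_1(t,t+T_L)$ and $h_1(t,t+T_L)$ are constants, since they depend only on the constant time to maturity), so that admissibility reduces to square-integrability of $Y$ under the optimal control. The paper essentially stops there: it passes to $\ln Y$, notes that the optimal proportions are continuous, and asserts that the linear SDE \eqref{eq:Y} then has a unique square-integrable solution. You go further and try to actually prove $\sup_{t\in[0,T]}\Eb[Y(t)^2]<\infty$ via It\^o on $Y^2$, a change of measure and the affine transform formula; this is the right way to make the claim rigorous, because the relative drift and diffusion of $Y$ are affine in $r$ and $\lambda$, so the second moment is controlled by quantities of the form $\Eb\left[\exp\left(c_1\int_0^T r(s)\,\dd s+c_2\int_0^T\lambda(s)\,\dd s\right)\right]$.

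The gap is in your final justification of that exponential moment. Non-explosion of the associated Riccati ODEs does \emph{not} follow from the Feller conditions $2a_r>\sigma_r^2$, $2a_\lambda(t)>\sigma_\lambda^2$, nor from the finiteness of $T$. The Feller condition only governs boundary behaviour at zero (strict positivity of $r$ and $\lambda$) and is irrelevant to $\Eb\left[\exp\left(c\int_0^T r(s)\,\dd s\right)\right]$. For a CIR process with mean reversion $b$ and volatility $\sigma$, this moment is finite for every horizon iff $c\le b^2/(2\sigma^2)$; for $c$ above that threshold the Riccati solution explodes at a finite time $T^\ast(c)$ and the moment is infinite for $T\ge T^\ast(c)$, so a finite horizon by itself does not help. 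The same caveat applies to your Novikov verification, which requires an exponential moment of exactly the same type (and under $\Pbh$ the relevant mean-reversion parameter is shifted, so the threshold changes). To close the argument you would need to check that the constants $c_1,c_2$ generated by $\gamma$, $\theta_r$, $\theta_\lambda$, $\theta_S$ and the bounded functions $A_1$, $A_2$ lie below the corresponding critical levels, or impose an explicit restriction on the parameters or on $T$ --- a verification which, in fairness, the paper's own proof also omits when it simply asserts that $Y^\star$ is square integrable.
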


\begin{proof}
For any $t\in[0,T]$, let $\tilde{Y}(t)=\ln Y(t)$, we have
\eqstar{
\dd \tilde{Y}(t) = \left(r(t)+ \tilde{\alpha}^Y (t)^\prime  M(t)\right)\dd t+\tilde{\alpha}^Y (t)^\prime \Sigma^\prime (t)\dd W(t),
}
where $\tilde{\alpha}^Y (t)=\frac{\alpha^{Y}(t)}{Y(t)}$. According to Proposition \ref{Prop:solution1}, it is easy to see that the optimal $\tilde{\alpha}^{Y^\star}(t)$ is a vector of continuous functions. Thus, $\Eb\left[\int_0^T|\tilde{\alpha}^{Y^\star}(t) |^2 \dd t\right]<+\infty$, $\tilde{Y}(t)$ admits a unique solution and is bounded over $[0,T]$. Furthermore, \eqref{eq:Y} has a unique solution $Y^\star(t)$ (obtained using $\alpha^{Y^\star}(t)$) which is square integrable. Thereafter, we have $\Eb\left[\int_0^T|\alpha^{Y^\star}(t) |^2 \dd t\right]<+\infty$ and ${\alpha^Y}^\star (t) \in \Ac$.
\end{proof}

From the above proposition, we observe a proportional relationship between ${\alpha^Y}^\star(t)$ and $Y(t)$ for constant $\theta_S$, $\sigma_S$ and $\gamma$. Namely, the optimal stock weight $\frac{{\alpha^Y_S}^\star(t)}{Y(t)}$ always stays the same throughout the investment horizon. This is similar to the classical Merton's portfolio problem where the optimal portfolio weight on the risky asset is constant over time. The convention is that the constant market price of risk causes no change in the investment behaviour. We also find that the optimal investment in longevity bond ${\alpha_L^Y}^\star(t)$ is actually taken from the investment in bond ${\alpha_B^Y}^\star(t)$ proportionally by a factor of $-\frac{f_1 (t,t+T_L )}{f_1 (t,t+T_B )}$. From \eqref{eq:affineB} and \eqref{eq:affineL}, we see that $f_1 (t,t+T_B )$ and $f_1 (t,t+T_L )$ are the duration of the rolling bond and rolling longevity bond, respectively. Since the duration is always positive, $-\frac{f_1 (t,t+T_L )}{f_1 (t,t+T_B )}$ is negative. We conclude that there is a negative relationship between the optimal investments in bond and longevity bond. Besides, if the maturities of the rolling bond and the rolling longevity bond are the same (that is, $T_B=T_L$), we have $f_1 (t,t+T_L )=f_1 (t,t+T_B )$ and ${\alpha_L^Y}^\star(t)$ is fully deduced from ${\alpha_B^Y}^\star(t)$. $h_1(t,t+T_L)$ is a  constant, $A_1(t,T)$ and $A_2(t,T)$ are positive and increase with $t$, thus $\frac{{\alpha_L^Y}^\star(t)}{Y(t)}$ is decreasing over time. Thus, the optimal investment proportion in longevity bond decreases with time. However, the behaviour of optimal proportions invested in bond and money market are not clear. Also, it is easy to deduce that greater the manager's risk-aversion, lower the portfolio weights on the longevity bond and stock. 

Lemma \ref{remark:equivalent} shows that Problem \eqref{eq:problem1} and \eqref{eq:problem3} are equivalent. According to \eqref{strategyD}, \eqref{strategyG}, \eqref{strategy} and Proposition \ref{Prop:solution1}, we obtain the solution to the initial problem \eqref{eq:problem1} and state it in the following proposition.
\begin{proposition}
\label{Prop:solution}
For any $t\in[0,T]$, under condition \eqref{eq:gamma condition} and the financial market setting \eqref{eq:market}--\eqref{eq:state}, the optimal solution to \eqref{eq:problem1} is given as 
\eqstar{
\alpha_B^\star (t)=&\frac{\theta_S\sigma_S^r -\theta_r\sigma_S-\sigma_S\sigma_rA_1(t,T)}{\gamma\sigma_S\sigma_r f_1 (t,t+T_B )}Y(t)+\frac{\pi  \int_T^\infty L(t,s)f_1(t,s)\dd s}{f_1(t,t+T_B)}\\
&-\frac{ \int_t ^TB(t,s)f_1 (t,s)\dd s}{f_1 (t,t+T_B )}-\frac{f_1 (t,t+T_L ) }{f_1 (t,t+T_B)}\alpha_L^\star (t),\\
\alpha_L^\star (t)=&-\frac{\theta_\lambda+\sigma_\lambda A_2(t,T)}{\gamma\sigma_\lambda h_1 (t,t+T_L)}Y(t)+\frac{\pi  \int_T^\infty L(t,s)h_1(t,s)\dd s}{h_1(t,t+T_L)},\\
\alpha_S^\star (t)=&\frac{\theta_S}{\gamma\sigma_S }Y(t),\quad \alpha_0 ^\star(t)=F(t)-\alpha_B^\star (t)-\alpha_L^\star (t)-\alpha_S^\star (t).
} 
\end{proposition}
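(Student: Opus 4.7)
The plan is to derive Proposition \ref{Prop:solution} as a direct corollary of the results already assembled in the preceding sections, so the work reduces to bookkeeping rather than new analysis. By Lemma \ref{remark:equivalent}, the constrained non-self-financing problem \eqref{eq:problem1} is equivalent to the self-financing portfolio problem \eqref{eq:problem3}, and the optimal controls are linked through \eqref{strategy}, which I invert as $\alpha^\star(t)={\alpha^Y}^\star(t)-\alpha^D(t)+\alpha^G(t)$. Proposition \ref{Prop:solution1} supplies ${\alpha^Y}^\star(t)$ in closed form under the risk-aversion condition \eqref{eq:gamma condition}, Proposition \ref{prop:D} gives the replicating holdings $\alpha^D(t)=[\alpha_B^D(t),0,0]^\prime$ for the future-contribution stream, and Proposition \ref{prop:G} gives the replicating holdings $\alpha^G(t)=[\alpha_B^G(t),\alpha_L^G(t),0]^\prime$ for the minimum guarantee.

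Concretely I would proceed component by component. The stock and longevity-bond components are immediate: since neither $D(t)$ nor the stock enters the replication of $G(t)$ through a stock position, $\alpha_S^\star(t)={\alpha_S^Y}^\star(t)=\theta_S Y(t)/(\gamma\sigma_S)$, and $\alpha_L^\star(t)={\alpha_L^Y}^\star(t)+\alpha_L^G(t)$ by substituting \eqref{strategyG}; this reproduces the stated formula for $\alpha_L^\star(t)$. For the rolling-bond position I write $\alpha_B^\star(t)={\alpha_B^Y}^\star(t)+\alpha_B^G(t)-\alpha_B^D(t)$, expand $\alpha_B^G(t)$ via \eqref{strategyG}, and then collect the two terms proportional to $f_1(t,t+T_L)/f_1(t,t+T_B)$. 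This factor multiplies ${\alpha_L^Y}^\star(t)+\alpha_L^G(t)$, which is precisely $\alpha_L^\star(t)$, yielding the compact form of $\alpha_B^\star(t)$ displayed in the statement. The money-market position then follows from the budget identity $\alpha_0^\star(t)=F(t)-\alpha_B^\star(t)-\alpha_L^\star(t)-\alpha_S^\star(t)$, where it is important to write the constraint in terms of the actual wealth $F(t)$ rather than the auxiliary surplus $Y(t)$.

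Admissibility is already in hand: Lemma \ref{lemma:verify} ensures ${\alpha^Y}^\star(t)\in\Ac$, while the square-integrability of $\alpha^D(t)$ and $\alpha^G(t)$ was established inside the proof of Lemma \ref{remark:equivalent} (using the Cauchy-criterion argument for the improper integrals defining $\alpha_B^G(t)$ and $\alpha_L^G(t)$). Hence $\alpha^\star(t)\in\Ac$. Because $D(T)=0$, the constraint $Y(T)\ge0$ a.s.\ transfers to $F(T)\ge G(T)$ a.s., so optimality in \eqref{eq:problem3} is equivalent to optimality in \eqref{eq:problem1}.

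I do not anticipate a substantive obstacle. The only delicate step is the grouping in the $B$-component, where one must be careful that the $-\alpha_L^G(t)f_1(t,t+T_L)/f_1(t,t+T_B)$ piece coming from the bond-replication of the guarantee combines cleanly with the corresponding term in ${\alpha_B^Y}^\star(t)$ so as to produce the factor $-f_1(t,t+T_L)\alpha_L^\star(t)/f_1(t,t+T_B)$; once this is written out explicitly, the remainder of the proof is a direct substitution.
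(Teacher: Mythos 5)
Your proposal is correct and takes essentially the same route as the paper: inverting \eqref{strategy} to get $\alpha^\star(t)={\alpha^Y}^\star(t)-\alpha^D(t)+\alpha^G(t)$, substituting Proposition \ref{Prop:solution1} together with \eqref{strategyD} and \eqref{strategyG}, and regrouping the $f_1(t,t+T_L)/f_1(t,t+T_B)$ terms so that they multiply $\alpha_L^\star(t)$, with admissibility handled by Lemmas \ref{remark:equivalent} and \ref{lemma:verify}. The only discrepancy is that your substitution of $\alpha_B^D(t)$ produces the term $-c\int_t^T B(t,s)f_1(t,s)\dd s \big/ f_1(t,t+T_B)$, whereas the proposition as printed omits the factor $c$; this is a typo in the paper's statement rather than a gap in your argument.
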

The result is obtained by straightforward calculations. We find that the optimal amount invested in stock does not depend on the fund wealth $F(t)$ but on the process $Y(t)$. Besides, the optimal stock weight $\frac{\alpha_S^\star(t)}{F(t)}$ does not keep constant any more and depends on the ratio $\frac{Y(t)}{F(t)}$. From the drift terms of $F(t)$ and $Y(t)$ in \eqref{eq:fundwealth} and \eqref{eq:Y}, we deduce that $F(t)$ is expected to increase faster than $Y(t)$. Thus, we conclude that the optimal stock weight falls over time. Similar to the result in Proposition \ref{Prop:solution1}, the optimal investments in bond and longevity bond correlate negatively. We can not easily infer from the solution how the optimal weights in bond and longevity bond change over time. Section \ref{sec:numerics} shows the results of numerical simulations which allows us to observe the optimal investment strategy dynamically. It is not straightforward to detect, from the optimal solution, how the risk-aversion coefficient $\gamma$, contribution rate $r_c$ and wage replacement ratio $r_w$ affect the optimal strategy. We provide numerical analyses on these parameters in the following section.

\section{Numerical applications}
\label{sec:numerics}
We first give a base scenario to show the optimal proportions invested in risky assets and money market. Then, sensitivity analyses are provided to study the impact of model parameters on the optimal investment strategy. In what follows, we denote by $w_B(t)$, $w_L(t)$, $w_S(t)$ and $w_0(t)$ the investment proportions in bond, longevity bond, stock and money market, respectively.

\subsection{The base scenario}
The values of the parameters for the base scenario are given in Table \ref{tab:data}. We do not use real market data but most of these parameter values are taken from \cite{menoncin2017longevity} and \cite{han2012optimal}. Here, we assume that there exists a rolling bond and a rolling longevity bond with constant maturity time (in years) $T_B=10$ and $T_L=10$, respectively. As the longevity bond is supposed to be issued based on the mortality index of an older population, we assume that the market consists of a rolling longevity bond whose underlying is the survival index of the 40-year-old population. The longevity risk tends to be largely ignored in very early ages. Hence, we suppose that the  scheme manager considers to add the longevity bond to the investment portfolio after the representative scheme member reaches the age of 40, that is, the initial age (in years) is $t_0=40$. $\lambda_0$ given in \eqref{eq:lambda0} is computed by using the parameters given in Table \ref{tab:data}. The retirement age is chosen as 65 years old, in other words the retirement time is $T=25$ years. $\theta_\lambda$ is the parameter of the market price of longevity risk. It is hard to decide the value of $\theta_\lambda$ since there are no longevity bonds liquidly traded in the market. In our base scenario, we use $\theta_\lambda=-0.10$. At time 0, the rolling  bond offers a risk premium of about $0.01370$ and the longevity bond provides a total risk premium of around $0.01372$. The stock gives a total risk premium of around $0.01670$. In this setting, the longevity risk premium is far less than the interest rate and stock risk premiums. Later, we give optimal investment strategies in different scenarios with different values of $\theta_\lambda$. Pension contribution rates differ widely among schemes and countries. According to \cite{hmrc}, in the UK, there is a limit on the amount of tax-free pension savings that an individual can pay into his pension account in each tax year. However, there is no cap on the contribution rate. As stated in \cite{govuk}, the Department for Work and Pensions requires that the minimum contribution rate for DC schemes is $8\%$ under the UK legislation. We first consider $r_c=0.15$ then give a sensitivity analysis on the contribution rate. \cite{oecd} shows the net replacement rates vary across a large range from around $30\%$ to $90\%$ or more in OECD countries. The average net replacement rates of an average earner from mandatory schemes is $59\%$. In our base scenario, we adopt $r_w=0.59$. We set the instantaneous wage as $w=15$, thus the instantaneous contribution and annuity instalment are $c=2.25$ and $\pi=8.85$, respectively. We suppose the representative member's cumulated pension wealth at 40 years old is $F_0=50$ and the manager's risk aversion coefficient is $\gamma=2$. 

\begin{table}[htbp]
\begin{center}
\caption{Values of parameters in the base scenario}
{\begin{tabular}{lcccccc} \toprule
  Interest rate & Mortality & Stock  \\ \midrule
$r_0=0.0621328$ & $b=12.9374$ &	$\sigma_S=0.14926$ \\
$a_r=0.0056210$ & $\phi=0.0009944$ & $\sigma_S^r=-0.0046306$ \\
$b_r=0.0904668$	& $m=86.4515$ & $\theta_S=0.1108301$ \\
$\sigma_r=0.0543625$ & $b_\lambda=0.5610000$ \\
$\theta_r=-0.5590635$ & $\sigma_\lambda=0.0352$  \\ \bottomrule
\end{tabular}}
\label{tab:data}
\end{center}
\end{table}
We collect 100 simulation paths and show the average paths of optimal investment proportions in the left plot in Figure \ref{fig:base}. The right plot in Figure \ref{fig:base} displays the average path of $\frac{Y(t)}{F(t)}$. As discussed in Section \ref{sec:solution}, the ratio $\frac{Y(t)}{F(t)}$ and the optimal weights in stock and longevity bond decline over time. We see that the optimal weight in bond also decreases while the weight in cash rises. In the beginning, the short position in the money market reveals that the scheme manager borrows money and invests in risky assets to obtain risk premiums. It reveals that the manager takes an aggressive approach to quickly increase the pension scheme's wealth to a high level in the early stage. The reduction in the risky assets investment shows that, when closer to the retirement time, the manager becomes more conservative and shifts the scheme's wealth to safer assets. Moreover, the effect of interest rate and mortality fluctuations on the minimum guarantee reduces when approaching retirement time. This means the need for interest rate risk and longevity risk hedging becomes lower in the later investment period. Consequently, the manager cuts the investment proportions in bond and longevity bond. During the first five years, there is not much difference between the optimal weights in bond and longevity bond. However, the investment proportion in bond drops faster than the longevity bond. Besides, the portfolio is dominated by longevity bond throughout the investment horizon. Even at time $T$, the longevity bond weight is still very high at around $66.44\%$. This observation implies that the longevity bond provides an efficient way to hedge against both the interest rate and longevity risks. 
\begin{figure}[htbp]
    \centering
    \includegraphics[scale=0.45]{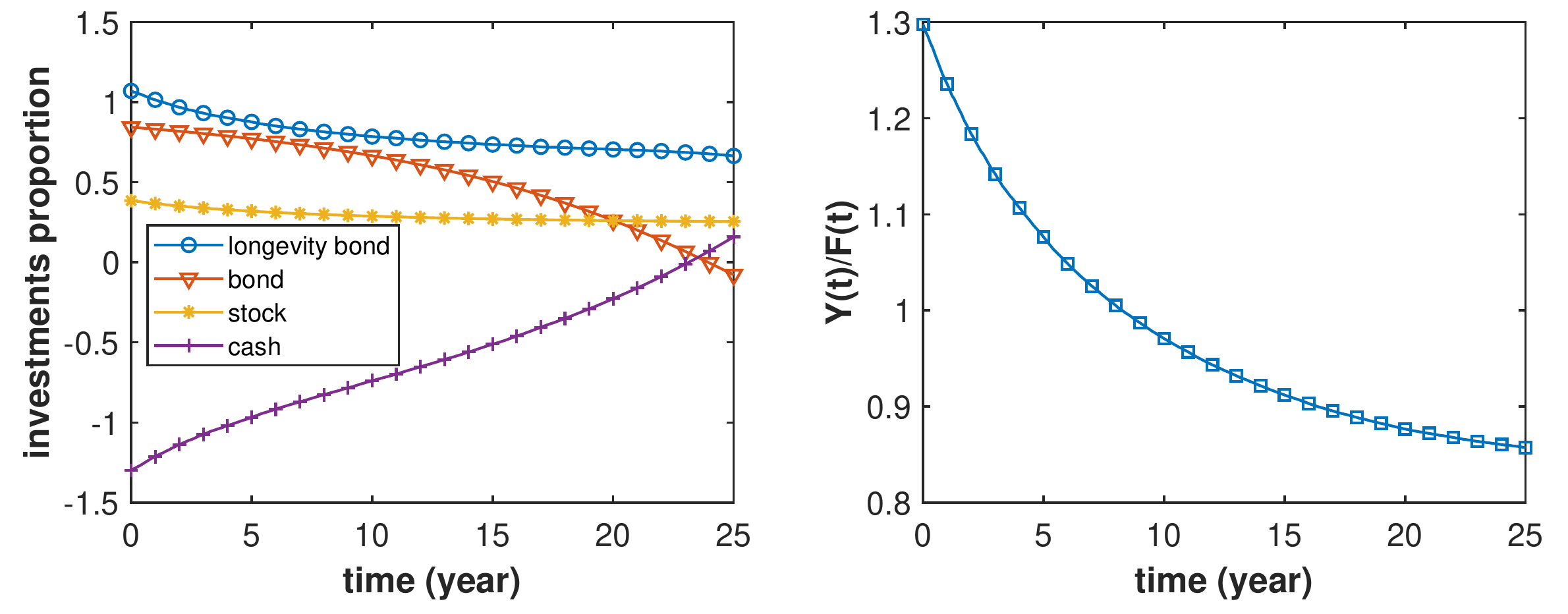}
    \caption{ Average paths of optimal investment proportions and $Y(t)/F(t)$ in the base scenario}
    \label{fig:base}
\end{figure}

\subsection{Sensitivity analysis}
\label{sec:sensitivity}
In Section \ref{sec:solution}, we give some comments on the impact of model parameters on the optimal investment strategy. This section provides various scenarios to investigate the impact of model parameters numerically. We are interested in the following parameters: risk aversion coefficient ($\gamma$), market price of longevity risk parameter ($\theta_\lambda$), maturity of rolling longevity bond ($T_L$), contribution rate ($r_c$) and wage replacement ratio ($r_w$). Other factors such as the market prices of interest rate risk and stock risk may also affect the optimal investment strategy sufficiently. Nonetheless, we do not look into those factors as the focus of our study is on hedging longevity risk.

\subsubsection{Risk-aversion}
In the power utility function \eqref{eq:utility}, $\gamma$ measures the relative risk-aversion. Higher the $\gamma$, more risk-averse the scheme manager. The four plots in Figure \ref{fig:gamma} show the average paths of optimal investment proportions with risk-aversion coefficient $\gamma$ equal to 2, 3, 4 and 5, respectively. Comparing the plots in Figure \ref{fig:base} and \ref{fig:gamma}, we observe that higher (or lower) the risk-aversion coefficient $\gamma$, higher (or lower) the investment proportions in bond and cash. Whereas, the investment proportions in longevity bond and stock decrease with $\gamma$. The bond is used to hedge the interest rate risk only  while the longevity bond provides interest rate risk as well as longevity risk hedge. A risk-averse investor tends to avoid relatively higher risk and prefers investments with lower risk but more guaranteed returns. Although the longevity bond provides higher risk premium, a scheme manager with a high risk-aversion coefficient prefers to invest more in safer assets such as bonds. The different strategies in these cases can also be explained by the optimal solution in Proposition \ref{Prop:solution}. The optimal investment proportions in bond and longevity bond are negatively correlated. In particular, we set $T_B=T_L=10$ and have $-\frac{f_1 (t,t+T_L )}{f_1 (t,t+T_B )}=-1$. This implies that the investment in longevity bond is fully taken from the investment in bond. Since the portfolio weight in longevity bond decreases with $\gamma$, the bond weight increases accordingly. This behaviour occurs as bond is a safer asset compared to longevity bond. The bottom right plot in Figure \ref{fig:gamma} reveals that, even for a highly risk-averse investor, the optimal investment proportion in longevity bond is almost always higher than $45\%$. Also, we find that the optimal weight in longevity bond is always higher than in stock, in all scenarios. Overall, we conclude that, even though highly risk-averse investors invest less in longevity bond, it is always an important component of the scheme's investment portfolio.
\begin{figure}[htbp]
    \centering
    \includegraphics[scale=0.45]{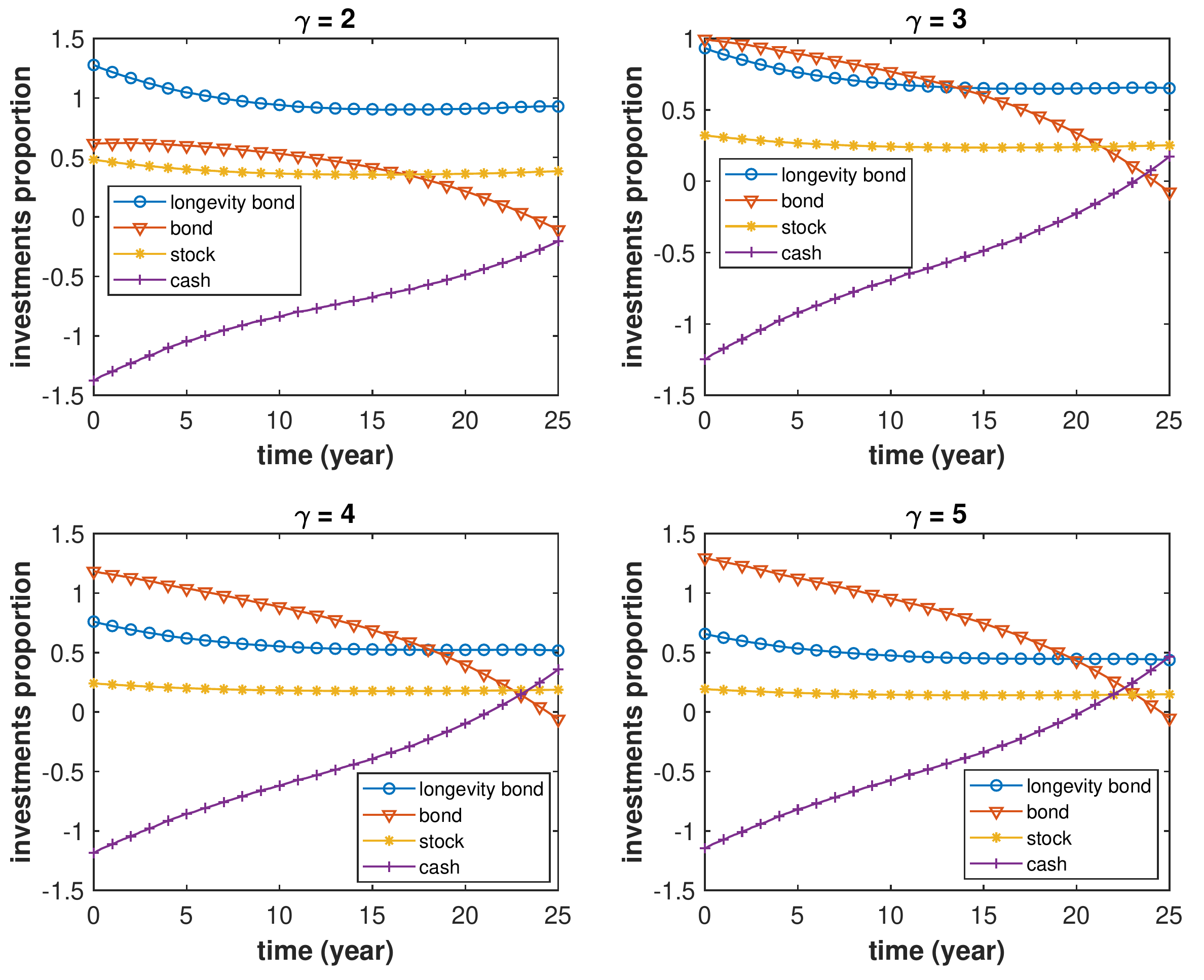}
    \caption{ Average paths of optimal investment proportions with $\gamma=2$, $3$, $4$ and $5$}
    \label{fig:gamma}
\end{figure}

\subsubsection{Market price of longevity risk}
The rolling bond provides an interest rate risk premium of $-f_1(t,t+T_B)\sigma_r\theta_rr(t)$. As the rolling longevity bond has the same maturity time as the rolling bond, it offers the same premium on the interest rate risk. Besides, the longevity bond provides a longevity risk premium of $-h_1(t,t+T_L)\sigma_\lambda\theta_\lambda \lambda(t)$. The stock offers an interest rate risk premium of $\sigma_S^r\theta_r r(t)$ and a stock risk premium of $\theta_S\sigma_S$. Clearly, $\theta_r$ and $\theta_S$ affect the optimal strategy. However, we are more interested in the impact of longevity risk premium and only provide an analysis on the market price of longevity risk parameter $\theta_\lambda$. From \eqref{eq:rollingL}, we learn that the longevity risk premium offered by the rolling longevity bond increases with $-\theta_\lambda$. Figure \ref{fig:theta} shows the average optimal strategies with $\theta_\lambda=-0.06$, $-0.08$, $-0.12$ and $-0.14$. At the initial time, the corresponding longevity risk premium in these cases are $1.1778\times10^{-5}$, $1.5703\times10^{-5}$, $2.3555\times10^{-5}$ and $2.7841\times10^{-5}$, respectively. In all these cases, the longevity risk premium is much smaller than the interest rate risk and stock risk premiums. The bottom right plot in Figure \ref{fig:theta} shows that when $\theta_\lambda=-0.14$, the optimal proportion in longevity bond is always around $1.0$ or above, and it always dominates the portfolio. The optimal proportion in stock is stable and a short position is taken in cash. After $t=21$, the scheme is financed by short-selling the bond. This is because when approaching the retirement time, the need to hedge against interest rate risk is insignificant compared to the need to hedge against the uncertain changes in mortality rate. The top left plot in Figure \ref{fig:theta} indicates that even in the case where the longevity risk premium is very low, the optimal weight in longevity bond is always higher than $45\%$. This reveals that the longevity bond is an important component in the pension scheme's investment portfolio, and is an efficient instrument to hedge the longevity risk. Comparing the plots, we observe that higher the longevity risk premium (that is, lower the $\theta_\lambda$) higher the portfolio weight in longevity bond. The interpretation is that with other parameters unchanged, a lower $\theta_\lambda$ increases the longevity risk premium but does not increase the uncertainty in longevity bond value. Thus, it makes the longevity bond more attractive for investment. The opposite reaction of bond weight and longevity bond weight against $\theta_\lambda$ is again explained by the discussion of Proposition \ref{Prop:solution}. The weight in cash is not sensitive to $\theta_\lambda$. This is because the changes in bond and longevity bond weights offset each other, and the total proportion in risky assets is not sensitive to $\theta_\lambda$. It is not a surprise that the optimal weight in stock is stable and is only slightly affected by $\theta_\lambda$. The reason can be inferred from the optimal solution that the portfolio weight on stock is $\frac{\theta_S}{\sigma_S \gamma }\frac{Y(t)}{F(t)}$. $\frac{\theta_S}{\sigma_S \gamma}$ does not depend on $\theta_\lambda$ and $\frac{Y(t)}{F(t)}$ is only slightly affected by $\theta_\lambda$. In summary, we conclude that the longevity bond hedges the scheme's longevity risk efficiently. Even when the longevity bond provides a relatively lower longevity risk premium, it is optimal to invest a large proportion of the scheme's wealth in it.
\begin{figure}[htbp]
    \centering
    \includegraphics[scale=0.45]{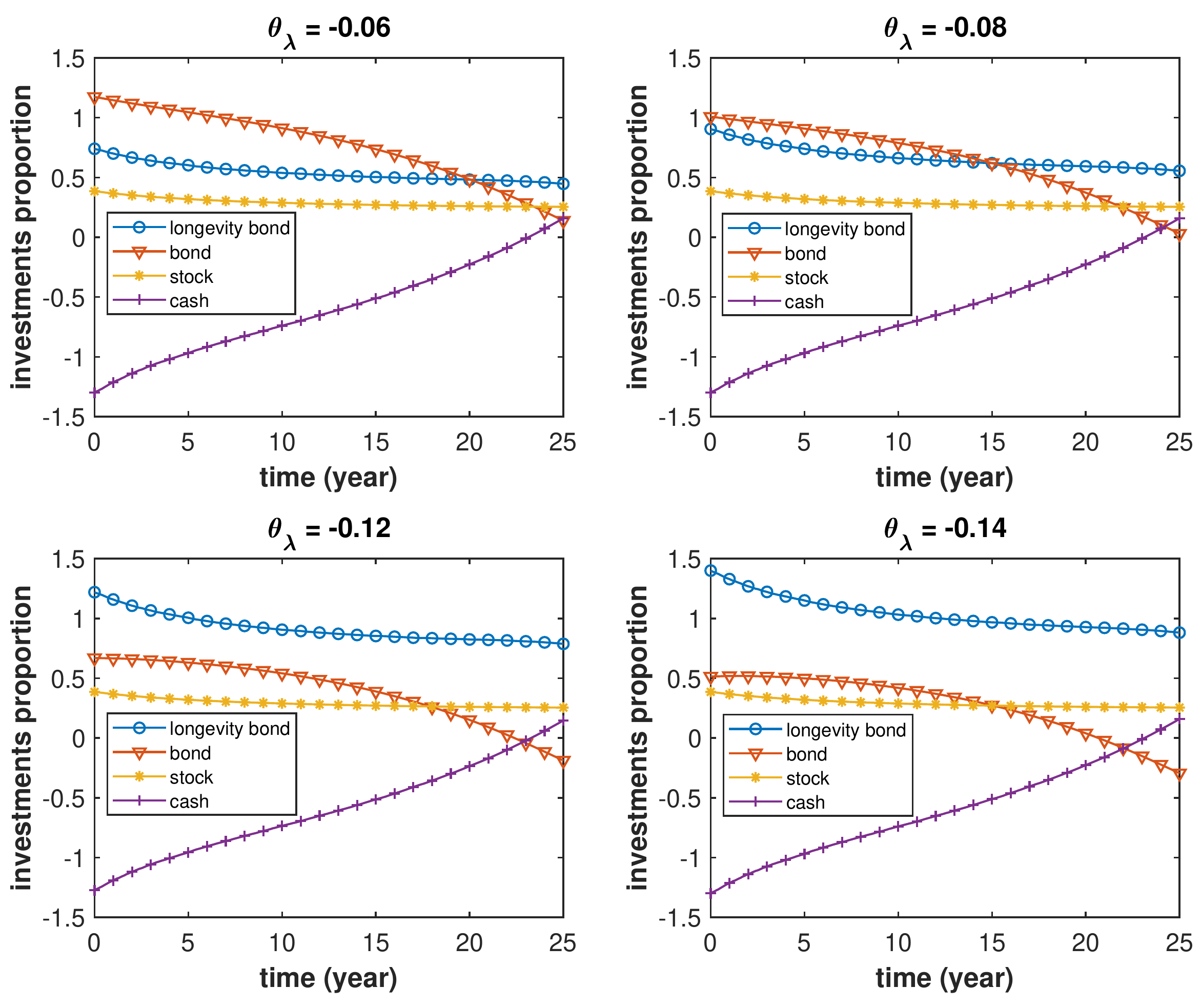}
    \caption{ Average paths of optimal investment proportions with $\theta_\lambda=-0.06$, $-0.08$, $-0.12$ and $-0.14$}
    \label{fig:theta}
\end{figure}

\subsubsection{Maturity of the rolling longevity bond}
\label{analysis:maturity}
In the previous numerical scenarios, we set $T_B=T_L=10$ and have $f_1 (t,t+T_B )=f_1 (t,t+T_L )$. In Figure \ref{fig:maturity}, we show the average optimal investment strategies with $T_L$ equal to 5, 15, 20 and 25. We observe that longer the rolling longevity bond maturity, lower the investment proportions in bond and longevity bond while more weight in cash. It seems that the optimal stock weight does not change much when $T_L$ increases. From Figure \ref{fig:base} and \ref{fig:maturity}, we see that although $w_L(t)$ decreases with $T_L$, the longevity bond always suppresses other assets. It is surprising that the optimal bond weight can reduce to around $-0.50$ when $T_L=25$ in late investment periods, meanwhile holding in the money market is positive. This indicates that the manager chooses to short sell bonds when approaching the retirement time. \cite{gao2008stochastic} and \cite{han2012optimal} also show similar findings. They showed that the pension portfolio would shift from investments in risky assets to the money market. The convention is that the bond guarantees a fixed amount of money at maturity. In the beginning, the weight put on bond is relatively high and is declining when closer to $T$ (the retirement time). It also implies that the need to hedge interest rate risk is lower when approaching $T$. We provide Figure \ref{fig:maturity2} to look into the impact of $T_L$ on the optimal weights in risky assets separately. From Proposition \ref{Prop:solution}, at any time $t$, the optimal stock weight depends on $\frac{Y(t)}{F(t)}$. The middle left plot in Figure \ref{fig:maturity2} shows that the optimal proportion in stock is not sensitive to $T_L$. Accordingly, there is a numerical evidence that $\frac{Y(t)}{F(t)}$ is not sensitive to $T_L$. 

We show in Figure \ref{fig:maturity3} that $\frac{1}{F(t)}$ does not change much when $T_L$ changes. We see that $\frac{1}{h_1(t,t+T_L)}$ is decreasing with $T_L$. According to the optimal solution, this results in the decline in longevity bond weight. At first, $\frac{1}{h_1(t,t+T_L)}$ drops dramatically when $T_L$ increases and later on changes only slightly. When $T_L$ equals to 5, 10, 15, 20 and 25, we have $\frac{1}{h_1(t,t+T_L)}$ equal to around 0.594886, 0.560673, 0.558716, 0.558597 and 0.558590, respectively. Therefore, we observe in Figure \ref{fig:maturity2} that the optimal weight in longevity bond is not sensitive to longer maturity times. $w_L$ shows an obvious decline when $T_L$ rises from 5 to 10. No distinguishable change in $w_L$ is observed when $T_L$ takes values 10, 15, 20 and 20. Intuitively, longer maturity time results in more uncertainty in the payment of rolling longevity bond. For a risk-averse investor, it is better to put less portfolio weight on longevity bond with a longer maturity. Compared to $w_L$, $w_B$ reacts more violently when changing $T_L$. This is apparent from the optimal solution as the changes in $w_L$ are scaled up by the factor $-\frac{f_1 (t,t+T_L ) }{f_1 (t,t+T_B)}$. The left plot in Figure \ref{fig:maturity3} shows $\frac{f_1 (t,t+T_L ) }{f_1 (t,t+T_B)}$ increases with $T_L$. Therefore, the optimal weight in bond decreases greatly with $T_L$ due to the deceasing $-\frac{f_1 (t,t+T_L ) }{f_1 (t,t+T_B)}w_L(t)$. The optimal investment proportion in bond becomes negative in the late years except for the case when $T_L=5$. This is because the longevity bond hedges longevity risk as well as interest rate risk. However, when closer to the retirement time, the need to hedge interest rate risk reduces while the longevity risk is still high. The negative positions in bond offset the interest rate risk hedge provided by the longevity bond. Even though negative positions in bond are observed, the sum of the weights in longevity bond and bond are always positive as shown in the bottom left plot in Figure \ref{fig:maturity2}. Overall, longer maturity times may lessen the attractiveness of longevity bond. Even though longer maturity times might be detrimental to its investment attractiveness, the longevity bond always plays an important role in DC scheme's risk management as its portfolio weight is always relatively high.

\begin{figure}[htbp]
    \centering
    \includegraphics[scale=0.45]{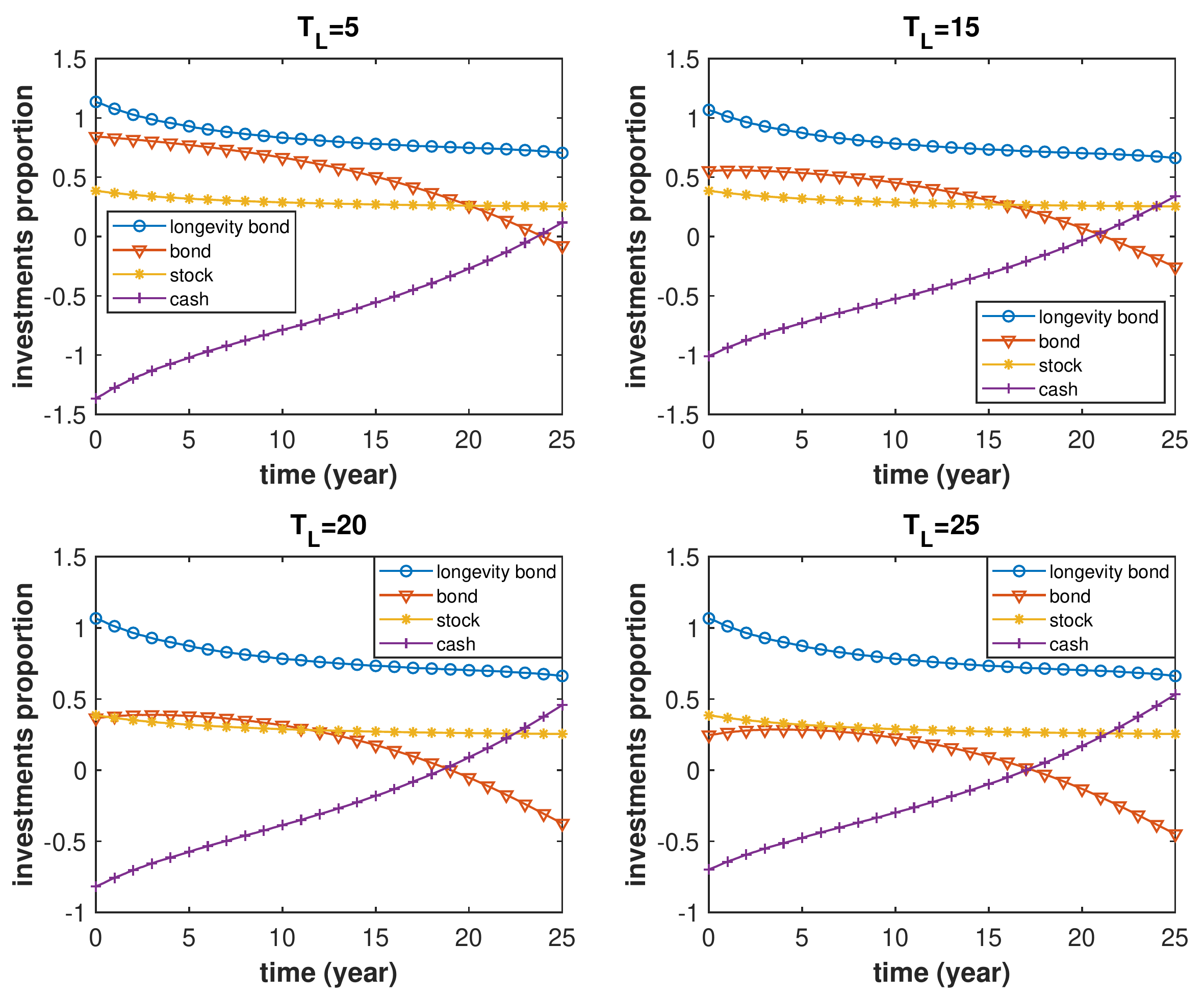}
    \caption{ Average paths of optimal investment proportions with $T_L=5$, $15$, $20$ and $25$ }
    \label{fig:maturity}
\end{figure}
\begin{figure}[htbp]
    \centering
    \includegraphics[scale=0.5]{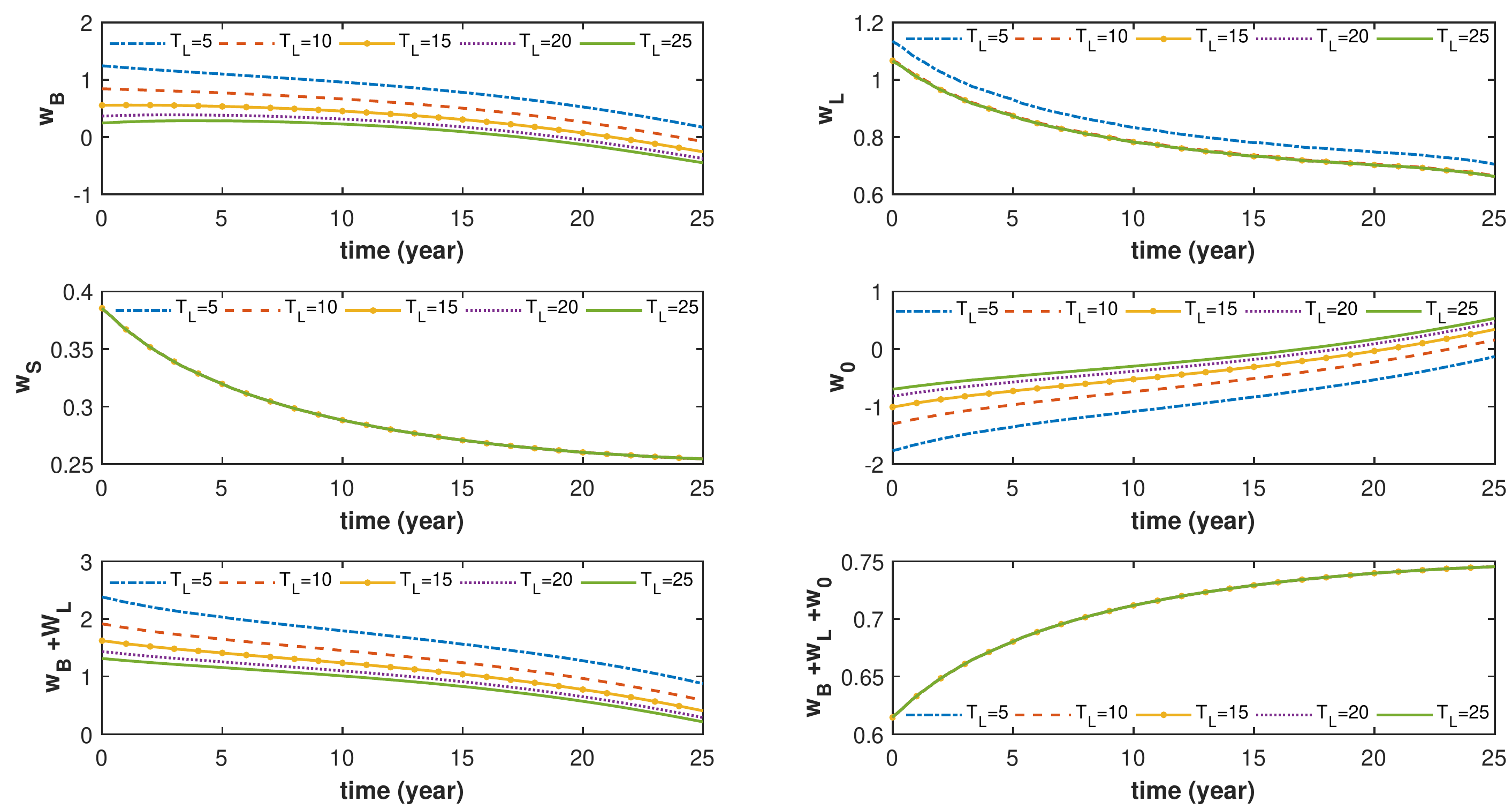}
    \caption{ Average paths of optimal investment proportions varying $T_L$ }
    \label{fig:maturity2}
\end{figure}
\begin{figure}[htbp]
    \centering
    \includegraphics[scale=0.45]{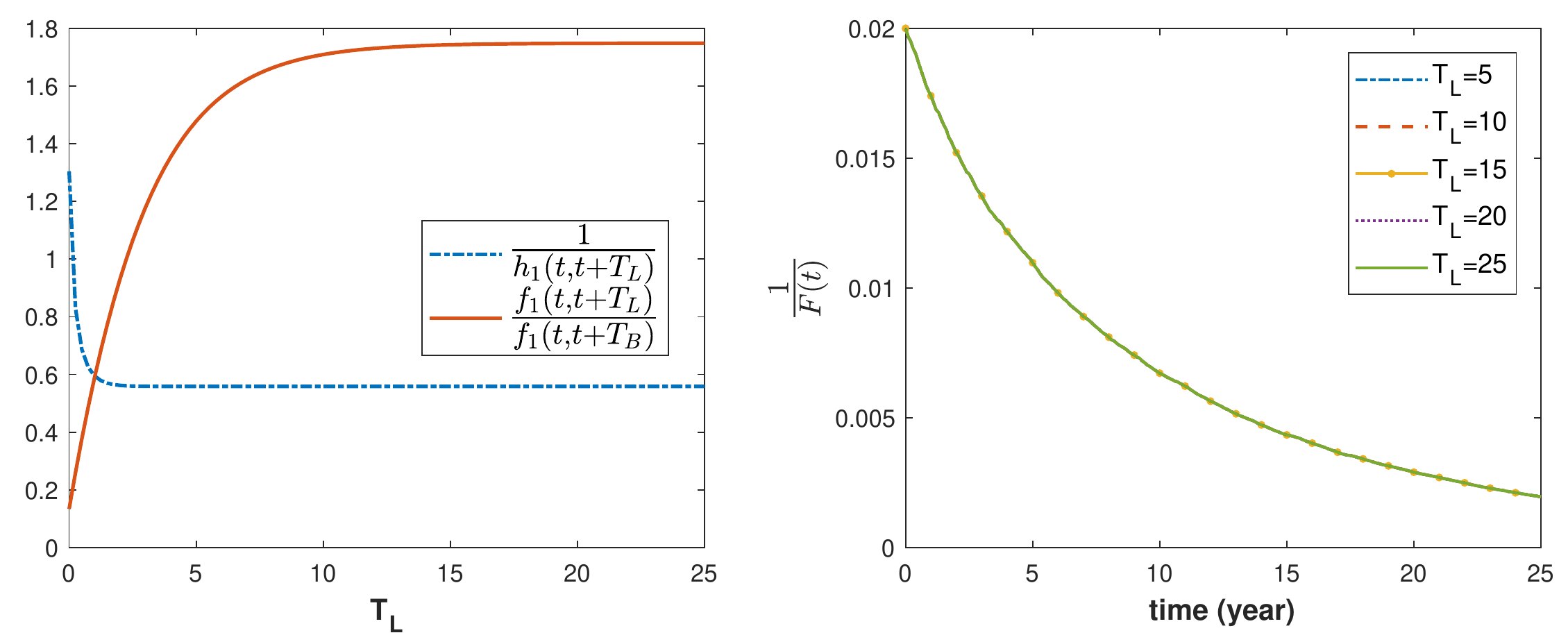}
    \caption{ $\frac{1}{h_1(t,t+T_L)}$ and $\frac{f_1(t,t+T_L)}{f_1(t,t+T_B)}$ versus $T_L$ ; Average path of $\frac{1}{F(t)}$ varying $T_L$ }
    \label{fig:maturity3}
\end{figure}

\subsubsection{Contribution rate}
\label{sec:contribution rate}
As stated in \cite{govuk} and \cite{oecd}, the minimum requirement on contribution rate for DC schemes is $8\%$, while there is no limit on the amount of contribution under UK legislation. For an eligible citizen, the government will add $1\%$ of the citizen's income into his pension in the form of tax relief resulting in an actual minimum contribution rate of $9\%$. An individual can contribute up to $100\%$ of his income into his pension scheme, but taxes may apply. Accordingly, we consider a maximum contribution rate of $40\%$. In Figure \ref{fig:contribution2}, we show the optimal proportions with contribution rate $r_c$ equals to 0.10, 0.20, 0.30 and 0.40. That is, the instantaneous contribution $c$ equals to 1.5, 3, 4.5 and 6. In general, we observe that higher the contribution rate, higher the weights in risky assets while lower the weight in cash. The intuition behind this observation is that higher the contribution rate, higher the present value of the scheme's future incomes. A high present value of the future incomes drives the manager to increase the investment in risky assets. Since there are guaranteed future incomes, the manager would like to take more risk and win more risk premiums. From Proposition \ref{Prop:solution}, the optimal stock weight depends on $\frac{Y(t)}{F(t)}$. At any time $t$, we have $Y(t)=F(t)+D(t)-G(t)$. More contribution paid into the scheme results in higher wealth level $F(t)$ and greater present value of future contributions $D(t)$. Besides, the present value of minimum guarantee $G(t)$ does not depend on the contribution rate. Consequently, $\frac{Y(t)}{F(t)}$ and the optimal investment proportion in stock increase with $r_c$. When closer to the retirement time, the optimal stock weight is less sensitive to $r_c$. The interpretation is that the wealth process $F(t)$ increases over time as there is continuous contribution paid into the scheme. Besides, the scheme receives investment returns. The expected value of future contributions $D(t)$ decreases over time and eventually reaches zero at $T$. Therefore, $\frac{Y(t)}{F(t)}$ drops over time and the impact of $r_c$ on optimal stock weight reduces. From the optimal solution, the optimal investment proportion in longevity bond at any time $t$ is 
\eqstar{
w_L^\star(t)=-\frac{\theta_\lambda+\sigma_\lambda A_2(t,T)}{\gamma\sigma_\lambda h_1 (t,t+T_L)}\frac{Y(t)}{F(t)}+\frac{\pi  \int_T^\infty L(t,s)h_1(t,s)\dd s}{h_1(t,t+T_L)}\frac{1}{F(t)}
}
Since $\frac{Y(t)}{F(t)}$ increases with $r_c$ and $\frac{1}{F(t)}$ decreases with $r_c$, it is hard to see from the optimal solution how $w_L(t)$ changes with $r_c$. From the numerical results, we observe that a higher $r_c$ leads to more weight in longevity bond in the first 17 years and $w_L(t)$ becomes less sensitive to $r_c$ in later period. We infer that in the early years, the first term in the above formula increases faster and the second term decreases slower when $r_c$ rises. Thus, $w_L(t)$ increases with $r_c$. When approaching $T$, the movements in the two terms offset each other gradually. Consequently, insensitivity is observed. The optimal weight in bond reacts in a similar manner with longevity bond, though far less sensitively. We also find that, when approaching $T$, both the proportions of the wealth invested in risky assets and cash are less sensitive to the contribution rate. Generally, our numerical results imply that it is optimal to put more weights on risky assets, especially longevity bond, if the contribution rate is high.

\begin{figure}[htbp]
    \centering
    \includegraphics[scale=0.45]{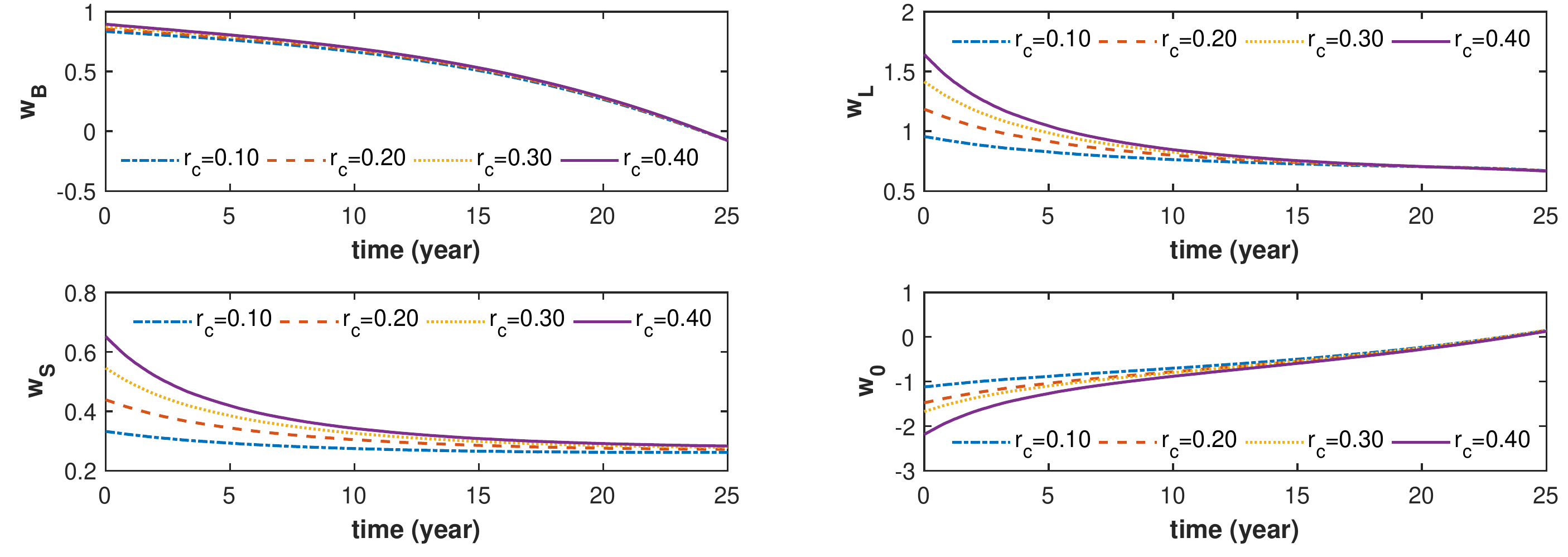}
    \caption{ Average paths of optimal investment proportions with $r_c=0.10$, $0.20$, $0.30$ and $0.40$}
    \label{fig:contribution2}
\end{figure}

\subsubsection{Wage replacement ratio}
\label{sec:wage replacement}
The wage replacement ratio is a good tool when estimating retirement income needs. A high wage replacement ratio implies that a high fraction of the pre-retirement income is needed to maintain living standard in retirement. \cite{oecd} reveals that the net replacement ratio varies from $30\%$ to $90\%$ among OECD countries. Accordingly, we set $r_w$ equal to 0.30, 0.50, 0.70 and 0.90 to test the impact of wage replacement ratio. In Figure \ref{fig:replacement2}, we look into the effect of $r_w$ on the optimal weights in bond, longevity bond, stock and cash separately. We find that the optimal weights in bond and cash are less responsible to changes in the wage replacement ratio. The optimal stock weight declines with the wage replacement ratio. While a high wage replacement ratio results in more fraction of the scheme wealth invested in longevity bond. It is straightforward to see that higher the wage replacement ratio, higher the annuity instalments and higher the $G(t)$. While the wealth process $F(t)$ and the discounted future contributions $D(t)$ do not depend on the wage replacement ratio. Thus, $\frac{Y(t)}{F(t)}=\frac{F(t)+D(t)-G(t)}{F(t)}$ declines with $r_w$. From Proposition \ref{Prop:solution}, we see that the optimal stock weight increases with $\frac{Y(t)}{F(t)}$ while the optimal longevity bond weight decreases with $\frac{Y(t)}{F(t)}$. The results reveal that it is optimal to increase the fraction of scheme wealth in longevity bond when the wage replacement ratio is high. Intuitively, a high wage replacement ratio implies the members require high annuity instalments and the scheme is exposed to greater longevity risk. As a consequence, the scheme manager invests a large proportion of the scheme's wealth in longevity bond to hedge the longevity risk.
\begin{figure}[htbp]
    \centering
    \includegraphics[scale=0.53]{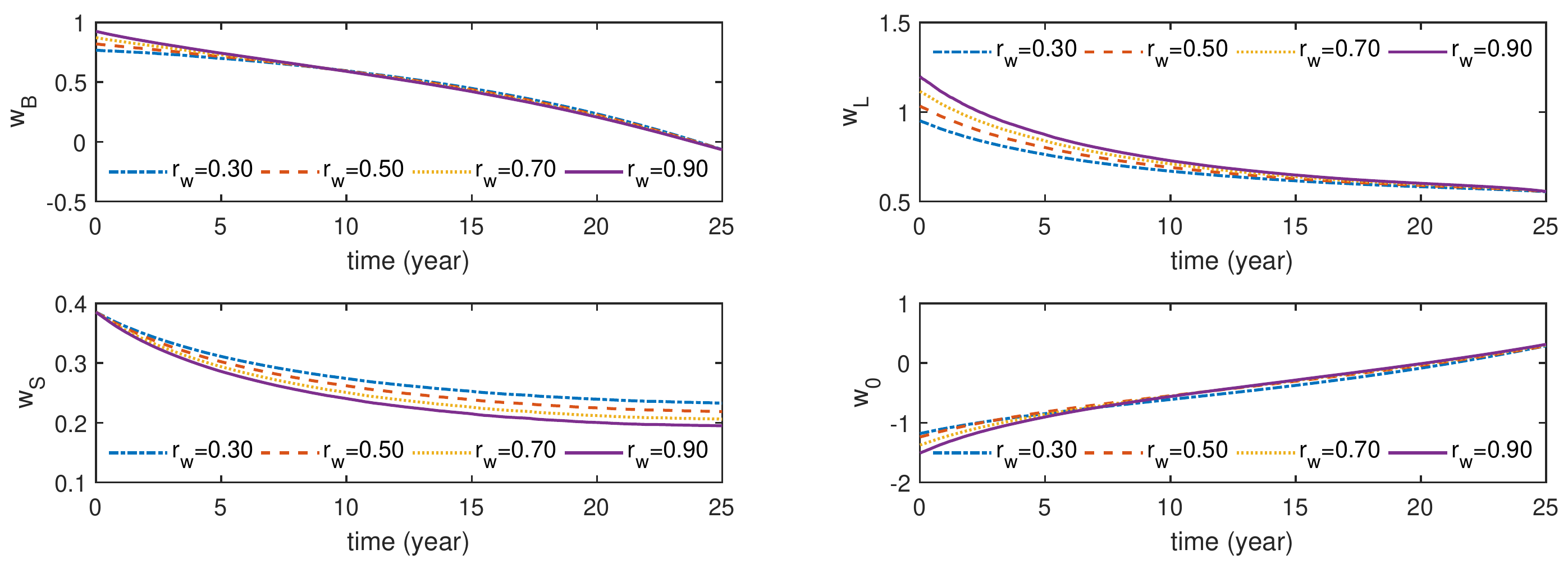}
    \caption{Average paths of optimal investment proportions with $r_w=0.30$, $0.50$, $0.70$ and $0.90$}
    \label{fig:replacement2}
\end{figure}

\section{Conclusion}
\label{sec:conclusion}
We studied the optimal investment problem for a DC pension scheme in a framework where both interest rate risk and longevity risk are considered. Our theoretical results and subsequent numerical studies showed evidence that the longevity bond plays an important role in DC scheme's risk management. We observed that more risk-averse the scheme manger, lower the investment proportion in longevity bond. However, even for a highly risk-averse manager, we showed that it is optimal to invest a large proportion of the scheme's wealth in the longevity bond. Also, compared to the investment in other risky assets, the investment proportion in longevity bond is shown to be relatively high even when the longevity risk premium is relatively low. Moreover, we observed that longer maturity times could be detrimental to the attractiveness of the longevity bond, however longevity bonds with longer maturity times always dominate the investment portfolio. Besides, we observed that high contribution rates and wage replacement ratios urge the scheme manager to invest more in the longevity bond. We conclude that the longevity bond is attractive to pension schemes and there is genuine potential in the development of mortality-linked derivatives.

\begin{appendix}
\section{Detailed calculations in Section \ref{sec:framework} }
\label{calculations}
\subsection{Pricing zero-coupon bond}
\label{sec:app1}
For any $t\in[0,T]$, under the risk-neutral probability measure $\widetilde{\Pb}$, the dynamics of the risk-free interest rate is given by
\eqstar{
\dd r(t) = \left( a_r-\tilde{b}_r r(t) \right)\dd t+\sigma_r \sqrt{r(t)} \dd \widetilde{W}_1 (t),
}
where $\tilde{b}_r=b_r+\theta_r\sigma_r$. Let $k(t,r(t))=B(t,T_B)$ and define the interest rate discount process $C(t)=e^{-\int_0^t r(u) \dd u}$. The discounted price of the zero-coupon bond $C(t)B(t,T_B)=C(t)k(t,r(t))$ is a martingale under the risk-neutral measure $\widetilde{\Pb}$. From It\^{o}'s formula, we get
\eqstar{
\dd \big(C(t)k(t,r(t))\big) &= k(t,r(t))\dd C(t)+C(t)\dd k(t,r(t))\\
& =  -rkC\dd t+C\Big[k_t \dd t+k_r \dd r+\frac{1}{2}k_{rr}\dd r\dd r \Big]\\
& = C\Big[-rk+k_t+k_r(a_r-\tilde{b}_r r)+ \frac{1}{2}k_{rr}\sigma_r^2 r \Big]\dd t+k_r\sigma_r\sqrt{r}C\dd \Wt_r.
}
Setting $\dd t$ term equal to zero leads to
\eqlnostar{PDE:gbond}{
k_t+k_r(a_r-\tilde{b}_r r)+ \frac{1}{2}k_{rr}\sigma_r^2 r=rk,
}
with the terminal condition $k(T_B,r) = 1$. As $r(t)$ follows an affine model, the solution of \eqref{PDE:gbond} could be expressed in the following form
\eqlnostar{eq:affinebond}{
k(t,r(t)) = e^{f_0(t,T_B)-f_1(t,T_B)r(t)},
}
with terminal conditions $f_0(T_B,T_B)=0$ and $f_1(T_B,T_B)=0$. From \eqref{PDE:gbond} and \eqref{eq:affinebond}, we derive
\eqstar{
f_0^\prime(t,T_B)-f_1^\prime (t,T_B) r(t)-f_1(t,T_B)(a_r-\tilde{b}_r r(t))+\frac{1}{2}f_1^2(t,T_B)\sigma_r^2 r(t)=r(t).
}
By collecting $r(t)$ terms above, we get the following system of ODEs
\eqstar{
&1=-f_1^\prime (t,T_B)+f_1(t,T_B)\tilde{b}_r +\frac{1}{2}f_1^2(t,T_B)\sigma_r^2 ,\\
&0=f_0^\prime(t,T_B)-f_1(t,T_B)a_r.
}
Solving the ODEs (for example, by using MATLAB), we obtain the solutions $f_0(t,T_B)$ and $f_1(t,T_B)$ in \eqref{eq:bondf}. Then, we obtain
\eqlnostar{eq:dBQ}{
\frac{\dd B(t,T_B)}{B(t,T_B)} =& r(t)\dd t-f_1(t,T_B)\sigma_r\sqrt{r(t)}\dd \widetilde{W}_1(t)\\
=&r(t)\dd t-f_1(t,T_B)\sigma_r\sqrt{r(t)}\left(\dd W_1(t)+\theta_r\sqrt{r(t)}\dd t\right).
}

\subsection{Pricing zero-coupon longevity bond}
\label{sec:appendix2}

For any $t\in[0,T]$, under the risk-neutral measure $\widetilde{\Pb}$, the dynamics of the force of mortality is given by
\eqstar{
    \dd \lambda(t)=\left(a_{\lambda} (t)-\tilde{b}_{\lambda} \lambda(t)\right)\dd t +\sigma_{\lambda}\sqrt{\lambda(t)}\dd \widetilde{W}_2 (t),
}
where $\tilde{b}_\lambda=b_\lambda +\theta_\lambda\sigma_\lambda $. Denote by $\tilde{C}(t) = e^{-\int_0^t \lambda(u)\dd u}$ and $\Bt(t,T_L)=\widetilde{\Eb} \left[e^{-\int_t ^{T_L} \lambda(u)\dd u} \biggm| \Fc(t)\right]$. Let $\tilde{k}(t,\lambda(t)) = \Bt(t,T_L)$, we have that $\Ct(t)\Bt(t,T_L)=\Ct(t)\tilde{k}(t,\lambda(t))$ is a martingale under $\Pbt$. We obtain by It\'o's formula
\eqlnostar{PDE:glongevity}{
\tilde{k}_t+\tilde{k}_\lambda(a_\lambda-\tilde{b}_\lambda \lambda)+ \frac{1}{2}\tilde{k}_{\lambda\lambda}\sigma_\lambda^2 \lambda =\lambda\tilde{k},
}
with the terminal condition $\tilde{k}(T_L,\lambda) = 1$. Due to the affine nature of $\lambda(t)$, the solution of \eqref{PDE:glongevity} can be expressed in the following form
\eqlnostar{eq:affinelongevity}{
\tilde{k}(t,\lambda(t)) = e^{h_0(t,T_L)-h_1(t,T_L)\lambda(t)},
}
with terminal conditions $h_0(T_L,T_L)=0$ and $h_1(T_L,T_L)=0$. From \eqref{PDE:glongevity} and \eqref{eq:affinelongevity}, we obtain
\eqstar{
 &h_0^\prime(t,T_L)-h_1^\prime (t,T_L) \lambda(t)-h_1(t,T_L)(a_\lambda(t)-\tilde{b}_\lambda \lambda(t))+\frac{1}{2}h_1^2(t,T_L)\sigma_\lambda^2 \lambda(t)=\lambda(t) .
}
By collecting the $\lambda(t)$ terms above, we obtain the following system of ODEs
\eqstar{
&1 = -h_1^\prime (t,T_L)+h_1(t,T_L)\tilde{b}_\lambda +\frac{1}{2}h_1^2(t,T_L)\sigma_\lambda^2 ,\\
&0 = h_0^\prime(t,T_L)-h_1(t,T_L)a_\lambda(t).
}
The solutions are given in \eqref{eq:bondh}. From It\^{o}'s formula, we get
\eqstar{
\frac{\dd \Bt(t,T_L)}{\Bt(t,T_L)} = &\lambda(t)\dd t -h_1(t,T_L)\sigma_\lambda\sqrt{\lambda(t)}\dd \widetilde{W}_2(t).
}
Since $r(t)$ and $\lambda(t)$ are independent, we can rewrite $L(t,T_L)$ as
\eqstar{
    L(t,T_L)=e^{-\int_0^t \lambda(u)\dd u}\widetilde{\Eb}\left[e^{-\int_t ^{T_L} r(u)\dd u}\biggm| \Fc(t)\right]\widetilde{\Eb}\left[e^{-\int_t ^{T_L} \lambda(u)\dd u} \biggm| \Fc(t)\right].
}
Denote by $N(t,T_L) = B(t,T_L)\Bt(t,T_L)$, we have $L(t,T_L)=\tilde{C}(t)N(t,T_L)$ which gives us the following
\eqstar{
\dd L(t,T_L) =& N(t,T_L)\dd \tilde{C}(t)+\tilde{C}(t)\dd N(t,T_L)\\
= & N(t,T_L)\dd \tilde{C}(t)+\tilde{C}(t)\Big[B(t,T_L)\dd \Bt(t,T_L)+\Bt(t,T_L)\dd B(t,T_L)\Big]\\
=& r(t)L(t,T_L)\dd t-f_1(t,T_L)\sigma_r\sqrt{r(t)}L(t,T_L) \left(\dd W_1(t)+\theta_r\sqrt{r(t)}\dd t\right)\\
&-h_1(t,T_L)\sigma_\lambda\sqrt{\lambda(t)}L(t,T_L)\left(\dd W_2(t)+\theta_\lambda\sqrt{\lambda(t)}\dd t\right).
}
\subsection{Dynamics of the scheme's wealth process}
\label{appendix:wealth}
To study the dynamics of the scheme's wealth, we employ a similar method used in \cite{he2013optimalb}. For any $t\in[0,T]$ and a small positive number $\Delta$, denote by $\nu(t,t+\Delta) $ the rate of investment return in the time interval $(t,t+\Delta)$, we have
\eqstar{
\nu(t,t+\Delta)F(t)=&\alpha_0(t)\frac{R(t+\Delta)-R(t)}{R(t)}+\alpha_B(t)\frac{B(t+\Delta)-B(t)}{B(t)}+\alpha_L(t) \frac{L(t+\Delta)-L(t)}{L(t)}\\
&+\alpha_S(t)\frac{S(t+\Delta)-S(t)}{S(t)}.
}
Let $q(t,t+\Delta)$ denote the fraction of the members who pass away in the time interval $(t,t+\Delta)$. The total amount of passed away members' pension taken out by the heirs equals to $q(t,t+\Delta)F(t)$. The scheme's cashflow in this period is $c\Delta+\nu(t,t+\Delta)F(t)-q(t,t+\Delta)F(t)$. At time $t+\Delta$, $1-q(t,t+\Delta)$ fraction of the members survive and the fund value is
\eqstar{
F(t+\Delta)=\Big(F(t)+c\Delta+\nu(t,t+\Delta)F(t)-q(t,t+\Delta)F(t)\Big)\frac{1}{1-q(t,t+\Delta)}.
}
The Taylor series approximation of $1-q(t,t+\Delta)=e^{-\int_t^{t+\Delta}\lambda(u)\dd u}$ gives
\eqstar{
\frac{1}{1-q(t,t+\Delta)}=e^{\int_t^{t+\Delta}\lambda(u)\dd u}=1+\lambda(t)\Delta +o(\Delta).
}
Since $\nu (t,t+\Delta)\lambda(t)\Delta=o(\Delta)$ and $q(t,t+\Delta)\lambda(t)\Delta=o(\Delta)$, we get
\eqstar{
F(t+\Delta)=&\Big(F(t)+c\Delta+\nu(t,t+\Delta)F(t)-q(t,t+\Delta)F(t)\Big)\left(1+\lambda(t)+o(\Delta)\right)\\
=&F(t)+c\Delta+\nu(t,t+\Delta)F(t)-q(t,t+\Delta)F(t)+\lambda(t)\Delta F(t)+o(\Delta).
}
Since we have,
\eqstar{
\underset{\Delta\to 0}{\lim} \frac{q(t,t+\Delta)}{\Delta}=&\lambda(t),\\
\underset{\Delta\to 0}{\lim} \nu(t,t+\Delta)F(t)=&\alpha_0(t)\frac{\dd R(t)}{R(t)}+\alpha_B(t)\frac{\dd B(t)}{B(t)}+\alpha_L(t)\frac{\dd L(t)}{L(t)}+\alpha_S(t)\frac{\dd S(t)}{S(t)},
}
we obtain
\eqstar{
\dd F(t)=&\left(\lambda(t)F(t)+c -\lambda(t)F(t)\right)\dd t+\alpha_0(t)\frac{\dd R(t)}{R(t)}+\alpha_B(t)\frac{\dd B(t)}{B(t)}+\alpha_L(t)\frac{\dd L(t)}{L(t)}+\alpha_S(t)\frac{\dd S(t)}{S(t)}\\
=&\Big(r(t)F(t)+c+\alpha(t)^\prime M(t)\Big)\dd t+\alpha(t)^\prime \Sigma(t)^\prime \dd W(t).
}

\section{Proof of Proposition \ref{prop:D}}
\label{appendix:dD}
\begin{proof}
At ant time $t\in[0,T]$, by interchanging the order of integration, we can rewrite \eqref{eq:future contributions} as
\eqstar{
D(t) =\Ebt\left[\int_t^T ce^{-\int_t ^s r(u)\dd u} \dd s\Biggm| \Fc(t)\right]=c\int_t^T B(t,s)\dd s.
}
From Leibniz integral rule, we get
\eqstar{
\frac{\dd D(t)}{\dd t} = -c+c\int_t^T \frac{\dd B(t,s)}{\dd t}\dd s.
}
Then, we obtain
\eqlnostar{eq:dD}{
\dd D(t) = &-c\dd t+c\int_t^T B(t,s)\frac{\dd B(t,s)}{B(t,s)}\dd s\nonumber \\
= & -c\dd t+r(t)D(t)\dd t+c\int_t^T B(t,s) \sigma_B(t,s)\dd s\left(\dd W_1(t)+\theta_r\sqrt{r(t)} \dd t\right).
}
Comparing the coefficients in \eqref{eq:rollingbond} and \eqref{eq:dD}, we obtain the holdings in rolling bond and money market:
\eqstar{
\alpha_B ^D (t)&=\frac{c\int_t ^TB(t,s)\sigma_B(t,s)\dd s  }{\sigma_B(t,t+T_B)}=\frac{ c\int_t ^TB(t,s)f_1 (t,s)\dd s}{f_1 (t,t+T_B )},\\
\alpha_B ^0 (t)&=D(t)-\alpha_B ^D (t).
}
\end{proof}
\section{Proof of Proposition \ref{prop:G}}
\label{appendix:dG}
\begin{proof}
At any time $t\in[0,T]$, we can rewrite \eqref{eq:definition guarantee} as
\eqstar{
G(t) =&\pi \Ebt \left[\Ebt \left[\int_T ^\infty  \frac{R(t)}{R(s)} p(s)\dd s \biggm| \Fc(T)\right]\biggm|\Fc(t)\right]\\
=&\pi  \int_T ^\infty \Ebt \left[\frac{R(t)}{R(s)} p(s)\biggm|\Fc(t)\right]\dd s= \pi  \int_T^\infty L(t,s)\dd s
}
From Leibniz integral rule, we get
\eqstar{
\frac{\dd G(t)}{\dd t} = \pi  \int_T^\infty\frac{\dd L(t,s)}{\dd t}\dd s.
} 
Then, we obtain
\eqlnostar{eq:dG}{
\dd G(t)=& \pi \int_T^\infty \frac{\dd L(t,s)}{L(t,s)}L(t,s)\dd s\\
=& \pi \int_T^\infty \left[r(t)\dd t+\sigma_L^r (t,s )\dd \Wt_1 (t)+\sigma_L ^\lambda(t,s)\dd \Wt_2 (t)\right]L(t,s)\dd s\\
= &r(t)G(t)\dd t+\pi \int_T^\infty\sigma_L^r(t,s)L(t,s)\dd s\left(\dd W_1(t) +\theta_r\sqrt{r(t)}\dd t\right)\\
&+\pi  \int_T^\infty\sigma_L^\lambda(t,s)L(t,s)\dd s \left(\dd W_2(t)+\theta_\lambda\sqrt{\lambda(t)}\dd t\right).
}
By comparing the coefficients in \eqref{eq:rollingbond}, \eqref{eq:rollingL} and \eqref{eq:dG}, we get the holdings in rolling longevity bond, rolling bond and money market:
\eqstar{
\alpha_L^G (t)=&\frac{\pi \int_T^\infty\sigma_L^\lambda(t,s)L(t,s)\dd s}{\sigma_L^\lambda(t,t+T_L)}=\frac{\pi \int_T^\infty L(t,s)h_1(t,s)\dd s}{h_1(t,t+T_L)},\\
\alpha_B^G(t) =& \frac{\pi \int_T^\infty\sigma_L^r(t,s)L(t,s)\dd s}{\sigma_B(t,t+T_B)}-\alpha_L^G (t)\frac{\sigma_L^r(t,t+T_L)}{\sigma_B(t,t+T_B)}\\
=&\frac{\pi \int_T^\infty L(t,s)f_1(t,s)\dd s}{f_1(t,t+T_B)}-\alpha_L^G (t)\frac{f_1(t,t+T_L)}{f_1(t,t+T_B)},\\
\alpha_0^G(t) = &G(t)-\alpha_B^G(t)-\alpha_L^G(t).
} 
\end{proof}
\section{Proof of Proposition \ref{Prop:solution1}}
\label{sec:appendixc}
\begin{proof}
For any $t\in[0,T]$, let $g(t,z)$ be a function of $t\ \text{and}\ z(t).$ We guess the solution of the second order non-linear partial differential equation (\ref{eq:hjb}) is of the following form
\eqlnostar{eq:vg}{
V(t,y,z)=\frac{y^{1-\gamma}}{1-\gamma}g(t,z),
}
with terminal condition $g(T,z) = 1$. Substituting \eqref{eq:vg} in \eqref{eq:hjb} leads to
\eqlnostar{eq:partialg}{
    0=&g_t+(1-\gamma)rg+\frac{1-\gamma}{2\gamma} M^\prime (\Sigma^\prime \Sigma)^{-1} Mg
    +\frac{1-\gamma}{\gamma} M^\prime  \Sigma^{-1} \xi g_z+\mu^\prime g_z+\frac{1}{2}tr(\xi^\prime \xi g_{zz} )+\frac{1-\gamma}{2\gamma g} {g_z}^\prime \xi^\prime \xi g_z.
}
We further guess that $g(t,z)$ is of the following form
\eqlnostar{eq:g}{
    g(t,z)=e^{A_0(t,T)+A(t,T)z(t)}=e^{A_0 (t,T)+A_1 (t,T)r(t)+A_2(t,T)\lambda(t)}
}
with terminal conditions $A_0 (T,T)=0$, $A_1 (T,T)=0$ and $A_2(T,T)=0$. Substituting \eqref{eq:g} in \eqref{eq:partialg}, we have
\eqstar{
0= &\left(A_0^\prime+A_1^\prime r+A_2^\prime \lambda\right)+(1-\gamma)r+\frac{1-\gamma}{2\gamma}\left(\theta_r^2 r+\theta_\lambda^2\lambda+\theta_S^2 \right)+\frac{1-\gamma}{\gamma}\left(\theta_r\sigma_r rA_1+\theta_\lambda\sigma_\lambda \lambda A_2\right)\\
&+\left( a_r-b_r r\right)A_1+\left( a_\lambda-b_\lambda \lambda\right)A_2+\frac{1}{2\gamma}\left( \sigma_r^2 r A_1^2+\sigma_\lambda^2 \lambda A_2^2\right).
}
By collecting the $r(t)$ and $\lambda(t)$ terms above, we obtain the following three ODEs:
\eqlnostar{eq:ODEs}
{
&0=A_1^\prime(s,T)+\frac{(1-\gamma)(2\gamma+\theta_r^2)}{2\gamma} +\frac{(1-\gamma)\theta_r \sigma_r-b_r\gamma}{\gamma} A_1 (s,T)+\frac{\sigma_r^2}{2\gamma}  A_1^2(s,T),\\
& 0= A_2^\prime(s,T)+\frac{(1-\gamma)\theta_\lambda^2}{2\gamma} +\frac{(1-\gamma) \theta_\lambda \sigma_\lambda-b_\lambda \gamma}{\gamma} A_2(s,T)+\frac{\sigma_\lambda^2}{2\gamma}  A_2(s,T)^2,\\
&0=A_0^\prime(s,T)+\frac{1-\gamma}{2\gamma} \theta_S^2 +a_r A_1(s,T) +a_\lambda(s) A_2.
} 
Under conditions $\Delta_1>0$ and $\Delta_2>0$, the solutions $A_0(t,T)$, $A_1(t,T)$ and $A_2(t,T)$ are given in Proposition \ref{Prop:solution1}. The first order condition \eqref{eq:fod} then becomes
\eqstar{
\alpha^{Y^\star}=&\frac{1}{\gamma} (\Sigma^\prime \Sigma)^{-1} My+\frac{1}{\gamma} \Sigma^{-1}  \xi Ay\\
=&\left[\begin{array}{ccc}
    \frac{\sigma_L^\lambda\sigma_S\theta_r\sqrt{r}-\sigma_L^r\sigma_S\theta_\lambda\sqrt{\lambda}-\sigma_L^\lambda\sigma_S^r\theta_S\sqrt{r}}{\sigma_B\sigma_S\sigma_L^\lambda}+\frac{\sigma_r\sqrt{r}}{\sigma_B}A_1-\frac{\sigma_L^r\sigma_\lambda\sqrt{\lambda}}{\sigma_B\sigma_L^\lambda}A_2\\
    \frac{\theta_\lambda\sqrt{\lambda}}{\sigma_L^\lambda}+\frac{\sigma_\lambda\sqrt{\lambda}}{\sigma_L^\lambda}A_2\\
    \frac{\theta_S}{\sigma_S} 
    \end{array}\right]\frac{y}{\gamma}.
}
\end{proof}
\end{appendix} 

\bibliographystyle{chicago}
\bibliography{longevityhedge}
\nocite{}

\end{document}